\definecolor{darkRed}{rgb}{0.60,.03,.03}
\newcommand{\blind}{1}
\newtheorem{assumption}{Assumption}
\newtheorem{theorem}{Theorem}
\newtheorem{corollary}{Corollary}
\newtheorem{definition}{Definition}
\def\ci{\mbox{\ensuremath{\perp\!\!\!\perp}}}
\begin{document}

\def\spacingset#1{\renewcommand{\baselinestretch}%
{#1}\small\normalsize} \spacingset{1}


\if1\blind { \title{On the Causal Interpretation of Randomized Interventional Indirect Effects} \author{Caleb  H.~Miles \thanks{Caleb
      H.  Miles is Assistant  Professor, Department of Biostatistics, Columbia
      University, New  York, NY 10032,  USA.}\hspace{.2cm}} 
  \date{}
    \maketitle
} \fi

\if0\blind
{
  \bigskip
  \bigskip
  \bigskip
  \begin{center}
    {\LARGE\bf On the Causal Interpretation of Randomized Interventional Indirect Effects}
\end{center}
  \medskip
} \fi

\bigskip

\begin{abstract}
  \noindent Identification of standard mediated effects such as the natural indirect effect relies on heavy causal assumptions. By circumventing such assumptions, so-called randomized interventional indirect effects have gained popularity in the mediation literature. Here, I introduce properties one might demand of an indirect effect measure in order for it to have a true mediational interpretation. For instance, the sharp null criterion requires an indirect effect measure to be null whenever no individual-level indirect effect exists. I show that without stronger assumptions, randomized interventional indirect effects do not satisfy such criteria. I additionally discuss alternative causal interpretations of such effects.

\end{abstract}
\noindent%
{\it  Keywords:} Cross-world counterfactual independence, Identification, HIV/AIDS, 
Mediation, Stochastic intervention, Time-varying confounding  \vfill

\newpage
\spacingset{1.5} 

\section{Introduction}
\label{sec:intro}
Mediation analysis is a widely popular discipline whose underlying objective is to understand the mechanism by which an exposure (or treatment) affects an outcome, in particular by studying intermediate variables that might be responsible for transmitting such an effect. It is commonly applied across a vast array of disciplines, such as economics, epidemiology, medicine, psychology,
sociology, and many others. Mediation analysis originated in the context of structural equation modeling, in which mediated effects were defined in terms of structural equation model coefficients \citep{wright1921correlation}. However, defining mediated effects in such a way has a number of limitations. Since such definitions are not connected to interventions or counterfactuals, they lack a clear causal interpretation. Additionally, this literature had previously made no explicit assumptions about, nor formal attempts to account for confounding. Lastly, mediated effects in this literature were defined with respect to a particular model, often a system of linear equations, hence they did not allow for interactions and nonlinearities. 

\cite{robins1992identifiability} and \cite{pearl2001direct} defined mediated effects in terms of nested counterfactuals, thereby grounding mediation in the formal language of causal inference. Following these landmark articles, causal inference has witnessed an explosion of literature on mediation analysis (see \cite{vanderweele2015explanation} for an overview). 
Despite this significant leap forward in the formalization of mediated effects, both their existence and identification have remained controversial. This article will focus primarily on the latter point rather than the first. \cite{pearl2001direct} observed that natural indirect effects (NIEs) are not identified in the presence of a confounder of the mediator--outcome relationship that is affected by the exposure. I will refer to such a variable as an \emph{exposure-induced confounder}; it is also referred to in the context of causal graphical models as a \emph{recanting witness}. This is troublesome, as the existence of an exposure-induced confounder cannot be eliminated even in a well-controlled randomized experiment. \cite{robins2003semantics} and \cite{robins2010alternative} highlighted the fact that one of the identification assumptions for the natural indirect effect involves the conditional independence of two counterfactuals that are defined with respect to two conflicting interventions. The latter referred to such an assumption as a \emph{cross-world counterfactual independence} assumption, since it involves two counterfactuals that live in different interventional ``worlds'', as it were. One problem with such an assumption is that 
it cannot be enforced experimentally without randomizing the mediator, which thereby renders the effect of exposure on the mediator unobservable. Another is that the assumption cannot be falsified.

More recently, attempts have been made to define mediated effects in such a way that circumvents the need for 
cross-world counterfactual independencies and the absence of exposure-induced confounders. These all involve defining a randomized intervention on the mediator that sets the mediator to a random draw from the distribution of the counterfactual mediator under an intervention on the exposure, rather than to the actual realized value of the counterfactual mediator for that subject itself. \cite{robins2003semantics} first discussed a causal parameter that can be interpreted as the mean outcome under such a randomized intervention, and highlighted that it is nonparametrically identifiable in the presence of an exposure-induced confounder and without any cross-world counterfactual independence assumptions. \cite{van2008direct} discussed direct and indirect effect contrasts of such causal parameters, and provided an alternative causal interpretation in terms of a weighted average of a controlled direct effect. In the intervening years, \cite{didelez2006direct} and \cite{geneletti2007identifying} 
defined direct and indirect effects with respect to randomized interventions on the mediator in the absence of an exposure-induced confounder, but without defining counterfactuals. While these effects are distinct from the natural direct and indirect effects of \cite{robins1992identifiability} and \cite{pearl2001direct} in their definition and interpretation, the authors equated the two under a causal model with no exposure-induced confounder. \cite{didelez2006direct}, \cite{geneletti2007identifying}, and \cite{van2008direct} all additionally defined generalizations of these effects by allowing for any user-specified intervention distribution on the mediator. \cite{vanderweele2014effect} defined what they termed ``randomized interventional analogs of the natural direct and indirect effects'', which are nearly identical to the direct and indirect effects of \cite{didelez2006direct} and \cite{geneletti2007identifying}, with two important distinctions. First, \cite{vanderweele2014effect} defined their effects in terms of counterfactuals, and second, they defined them in the presence of an exposure-induced confounder, noting that these effects remain nonparametrically identified in such settings. \cite{vanderweele2017mediation} and \cite{zheng2017longitudinal} introduced two distinct extensions of the effect of \cite{vanderweele2014effect} to a longitudinal, time-varying exposure setting. In such settings, the identification challenge posed by exposure-induced confounders is exacerbated, as discussed in \cite{shpitser2013counterfactual}, and while certain path-specific effects can be identified, 
analogs of the NIE with time-varying exposures are extremely unlikely to be.

Since \cite{vanderweele2014effect}, randomized interventional definitions of direct and indirect effects have gained a great deal of popularity in the causal inference literature \citep{lok2016defining, lin2017interventional, 
vansteelandt2017interventional, 
vansteelandt2019mediation,mittinty2020longitudinal, diaz2021nonparametric, 
nguyen2021clarifying, 
xia2021identification, devick2022role, loh2022nonlinear, rudolph2022efficiently}. I shall focus on the randomized interventional indirect effect of \cite{vanderweele2014effect}, which I will denote NIE$^{\text{R}}$, as this formulation appears to have had the most widespread adoption; however, I reiterate that it is nearly identical to the effects of \cite{didelez2006direct}, \cite{geneletti2007identifying}, and \cite{van2008direct}.

There has been some folk wisdom circulating (dispensed primarily through comments during conference sessions and personal communications) that these randomized interventional indirect effects do not capture a true mediational effect. However, to the best of my knowledge, this notion has not been clearly formalized nor officially documented in the mediation analysis literature. In this article, I propose 
criteria for a definition of an indirect effect to have a true mediational interpretation, and demonstrate that the NIE$^{\text{R}}$ does not satisfy these criteria without stronger assumptions than have been given for its identification. I give examples of such assumptions, and show that for most of these, the standard NIE will also be identified, hence the NIE$^{\text{R}}$ no longer offers a clear advantage over the NIE as an indirect effect measure in such settings.

The remainder of the article is organized as follows. In Section \ref{sec:prelims}, I define notation, causal estimands, causal models, and provide nonparametric identification formulas for the causal estimands. In Section \ref{sec:snc}, I define desiderata for indirect effect measures. 
In Sections \ref{sec:recanting-witness} and \ref{sec:no-recanting}, I discuss satisfaction of the indirect effect measure criteria defined in Section \ref{sec:snc} in the presence of an exposure-induced confounder and in the absence of cross-world counterfactual independencies, respectively. In Section \ref{sec:interpretations}, I discuss alternative, non-mediational causal interpretations of randomized interventional indirect effects. In Section \ref{sec:related}, I discuss other randomized interventional indirect effects that are closely related to the NIE$^{\text{R}}$. In Section \ref{sec:pepfar}, I illustrate the findings in this article with an application to the question of whether adherence mediates the effect of antiretroviral therapy on virologic failure among HIV patients in data from the Harvard PEPFAR program in Nigeria. Lastly, I conclude with a discussion in Section \ref{sec:discussion}.


\section{Preliminaries}
\label{sec:prelims}

\subsection{Notation and estimands}
\label{subsec:estimands}
In the mediation analysis setting, one observes (at a minimum) i.i.d.~longitudinal samples of the random variables $(A,M,Y)$, where $A$ is the exposure, $Y$ is the outcome measured at a follow-up timepoint, and $M$ is a potential mediator measured at an intermediate timepoint between $A$ and $Y$. I will consider $A$ to be binary taking values in $\{a^*,a\}$ in this article unless otherwise stated; however, results will extend naturally to general real-valued exposures. I will use $a'$ to denote an arbitrary value in $\{a^*,a\}$. The level $a^*$ is often used to denote ``untreated'' or a control condition such as standard of care, and $a$ to denote ``treated''. More generally these can be thought of as a (possibly arbitrarily chosen) ``reference level'' and a ``comparison level'', respectively, as is the case in the PEPFAR example. 

For a motivating example, I will illustrate with a mediation analysis considered in \cite{miles2017quantifying,miles2020semiparametric}, in which we studied whether adherence to a prescribed antiretroviral therapy (ART) regimen plays a role in mediating the effect of different ART regimens on virologic failure among HIV patients observed in the Harvard President's Emergency Plan for AIDS Relief (PEPFAR) program in Nigeria. This was an observational study with patients taking a number of different ART regimens. We focused on two regimens in particular, defining the exposure $A$ to be an indicator of ART regimen, with $a^*=\text{TDF+3TC/FTC+NVP}$ and $a=\text{AZT+3TC+NVP}$. 
The mediator $M$ was a trichotomization of a continuous measure of adherence based on pill counts averaged over six months, with cutoffs at 80\% and 95\%. The outcome $Y$ was an indicator of virologic failure at one year, defined as repeat viral load above 1,000 copies/mL at 12 and 18 months. 

I will first assume the existence of counterfactual outcomes (or potential outcomes) $Y(a')$, which is the outcome we would have observed (possibly contrary to fact) had $A$ been assigned to level $a'$. In the PEPFAR example, this would be the virologic failure status we would have observed had we intervened to assign the patient to receive ART regimen $a'$.

Throughout, I will focus on effect measures on the difference scale, though results generalize naturally to other scales as well. The \emph{average treatment effect} (ATE), also often referred to as the \emph{total effect} (TE) in the context of mediation analysis, is defined as $\text{TE}\equiv E\{Y(a)\}-E\{Y(a^*)\}$, and is the difference in expected counterfactual outcomes under an intervention assigning all units to the comparison level versus an intervention assigning all units to the reference level. In the PEPFAR example, the total effect is the difference in risk of virologic failure had patients been assigned to AZT+3TC+NVP versus TDF+3TC/FTC+NVP. One desirable property of an indirect effect (or collection of path-specific effects) is that they decompose the total effect into a direct effect and an indirect effect (or collection of path-specific effects).

To discuss mediated effects, we must also assume the existence of the counterfactual $Y(a',m)$ for each $a'\in\{a^*,a\}$, which is the outcome we would have observed (possibly contrary to fact) had $A$ been assigned to level $a'$ and $M$ been assigned to level $m$. \cite{robins1992identifiability} and \cite{pearl2001direct} defined the \emph{controlled direct effect} (CDE) as $\text{CDE}(m)\equiv E\{Y(a,m)\}-E\{Y(a^*,m)\}$, which is the difference in expected counterfactual outcomes under an intervention assigning all units to be exposed versus an intervention assigning all units to the control condition while also intervening to set the mediator to level $m$ in both cases. In the PEPFAR example, the $\text{CDE}(m)$ is the difference in risk of virologic failure had patients been assigned to AZT+3TC+NVP versus TDF+3TC/FTC+NVP when also forcing patients to adhere at level $m$. In general, there is not considered to be an indirect effect corresponding to the controlled direct effect for reasons to be discussed in the following section. However, the \emph{portion eliminated}, defined as $\text{PE}(m)\equiv\text{TE} - \text{CDE}(m)$, has a useful policy interpretation as the portion of the effect of $A$ on $Y$ that would be eliminated were $M$ to be intervened on to be set to the level $m$, despite lacking a mediation interpretation \citep{robins1992identifiability}. In the PEPFAR example, the $\text{PE}(m)$ is the portion of the total effect of ART regimen on risk of virologic failure that would be eliminated were patients forced to adhere at level $m$. The portion eliminated divided by the total effect is known as the \emph{proportion eliminated}. 

To discuss natural direct and indirect effects, we must further assume the existence of the counterfactual mediator (or potential mediator) $M(a')$ for each $a'\in\{a^*,a\}$ and the nested counterfactual $Y\{a',M(a'')\}$ for each $(a',a'')\in\{a^*,a\}^2$. The former is the mediator we would have observed (possibly contrary to fact) had $A$ been assigned to level $a'$. The latter is interpreted as the outcome we would have observed (possibly contrary to fact) had $A$ been assigned to $a'$ and $M$ been assigned to its counterfactual value $M(a'')$ under an intervention setting $A$ to $a''$. The composition assumption states that $Y\{a',M(a')\}=Y(a')$ for each $a'\in\{a^*,a\}$. \cite{robins1992identifiability} define the contrast $E[Y\{a,M(a)\}]-E[Y\{a,M(a^*)\}]$ to be the total indirect effect, and the contrast $E[Y\{a,M(a^*)\}]-E[Y\{a^*,M(a^*)\}]$ to be the pure direct effect. \cite{pearl2001direct} defined the latter to be the natural direct effect 
and $E[Y\{a^*,M(a)\}]-E[Y\{a^*,M(a^*)\}]$ 
to be the natural indirect effect (or the negative of the total indirect effect when instead $a$ is considered to be the exposure reference level). In this article, I will refer to $E[Y\{a,M(a)\}]-E[Y\{a,M(a^*)\}]$ as the \emph{natural indirect effect} (NIE) and to $E[Y\{a,M(a^*)\}]-E[Y\{a^*,M(a^*)\}]$ as the \emph{natural direct effect} (NDE), as is common in the mediation literature. In the PEPFAR example, the NIE is the change in risk of virologic failure when patients are assigned to receive AZT+3TC+NVP, but their adherence level switches from that under AZT+3TC+NVP to what it naturally would have been had they instead been assigned to receive TDF+3TC/FTC+NVP. The NDE is the difference in risk of virologic failure had patients been assigned to AZT+3TC+NVP versus TDF+3TC/FTC+NVP, but their adherence remained at the level it naturally would have taken had they been assigned to TDF+3TC/FTC+NVP in either case. Given the composition assumption, we have the following effect decomposition: 
\begin{align*}
    \text{TE} &= E\{Y(a)\}-E\{Y(a^*)\} = E[Y\{a,M(a)\}]-E[Y\{a^*,M(a^*)\}]\\
    &= E[Y\{a,M(a)\}]-E[Y\{a,M(a^*)\}] + E[Y\{a,M(a^*)\}]-E[Y\{a^*,M(a^*)\}]\\ 
    &= \text{NIE} + \text{NDE}.
\end{align*}

Lastly, I will review definitions of the randomized interventional analogs to the natural direct and indirect effects, also commonly referred to as randomized interventional direct and indirect effects or simply interventional direct and indirect effects. These are defined with respect to a distinct hypothetical random variable, denoted $G(a')$ (\cite{vanderweele2014effect} use the notation $G(a'\mid {\bf C})$), such that it is closely related to, yet distinct from $M(a')$. 
The random variable $G(a')$ has the following two properties. First, it follows the same conditional distribution given some set of pre-exposure covariates ${\bf C}$, as the counterfactual mediator $M(a')$, i.e., $F_{G(a')\mid {\bf C}}(m\mid {\bf c}) = F_{M(a')\mid {\bf C}}(m\mid {\bf c})$, where each is the corresponding conditional distribution function given ${\bf C}$. Second, $G(a')$ is randomized within strata of ${\bf C}$ according to the distribution of $M(a')$ such that it is conditionally independent of all other observed and counterfactual (nested or otherwise) variables that have been defined thus far given ${\bf C}$. Technically, it is only the conditional independence of $G(a')$ and $Y(a,m)$ given ${\bf C}$ for each $a'\in\{a,a^*\}$ that is needed for nonparametric identification, though conditional randomization of $G(a')$ makes for a simpler interpretation. In fact, the joint conditional distribution of $G(a')$ with $Y(a,m)$ given $\bf C$ may differ from the joint conditional distribution of $M(a')$ and $Y(a,m)$ given $\bf C$, the latter of which may well not be independent, even though $G(a')$ and $M(a')$ share the same conditional distributions given $\bf C$. As an example, suppose the true conditional joint distribution of $M(a')$ and $Y(a,m)$ given $\bf C$ is bivariate normal with mean $({\bf C^T}\boldsymbol{\beta}_M, {\bf C^T}\boldsymbol{\beta}_Y)^T$ for some parameters $\boldsymbol{\beta}_M$ and $\boldsymbol{\beta}_Y$ and covariance $\boldsymbol\Sigma$ with diagonal elements equal to one and nonzero off-diagonal. Then for subject $i$, their $G_i(a')$ would be a random draw from the distribution $N({\bf C_i}^T\boldsymbol{\beta},1)$; however, the conditional joint distribution of $G(a')$ and $Y(a,m)$ given $\bf C$ would be bivariate normal with mean $({\bf C^T}\boldsymbol{\beta}_M, {\bf C^T}\boldsymbol{\beta}_Y)^T$, but with identity covariance matrix.

Let $Y\{a',G(a'')\}$ be the counterfactual outcome we would have observed (possibly contrary to fact) had $A$ been assigned to $a'$ and $M$ been assigned to be $G(a'')$, i.e., a value randomly drawn from the conditional distribution of the counterfactual mediator $M(a'')$ given ${\bf C}$ (as opposed to the precise value realized by $M(a'')$ itself). In the PEPFAR example, $Y\{a',G(a'')\}$ is the virologic failure status we would have observed had the patient been assigned to ART regimen $a'$ and their adherence been forced to take a level determined by a random draw from the conditional distribution of what the adherence level would have been had they been assigned to ART regimen $a''$ given their baseline covariates. \cite{vanderweele2014effect} define the \emph{randomized interventional analog to the natural direct and indirect effects} to be $\text{NDE}^{\text{R}}\equiv E[Y\{a,G(a^*)\}] - E[Y\{a^*,G(a^*)\}]$ and $\text{NIE}^{\text{R}}\equiv E[Y\{a,G(a)\}] - E[Y\{a,G(a^*)\}]$, respectively. In the PEPFAR example, these effects can be interpreted as contrasts of the mean counterfactuals $Y\{a',G(a'')\}$ interpreted above for different specifications of $a'$ and $a''$. The interpretation of the $\text{NIE}^{\text{R}}$ in the PEPFAR example will be discussed in greater detail in Section \ref{sec:pepfar}. Unlike the NDE and NIE, the $\text{NDE}^{\text{R}}$ and $\text{NIE}^{\text{R}}$ do not decompose the total effect without assumptions going beyond those that are used for nonparametric identification, since $E[Y\{a',G(a')\}] = E[Y\{a',M(a')\}]$ does not hold in general. Instead, they decompose a version of a total effect that is defined with respect to the randomized intervention, viz., $\text{TE}^{\text{R}}\equiv E[Y\{a,G(a)\}] - E[Y\{a^*,G(a^*)\}]$. Clearly, this has a less straightforward interpretation than the TE, which contrasts much simpler interventions.


\subsection{Causal modeling assumptions and nonparametric identification}
\label{subsec:assns}

The various effects defined in Section \ref{subsec:estimands} can be nonparametrically identified under certain consistency, exchangeability, and positivity assumptions. Exchangeability assumptions rule out confounding as an alternative explanation for associations and ensure that subjects with one exposure and/or mediator realization can be used to estimate facets of the distribution of counterfactuals under this realization for a subject with a distinct exposure and/or mediator realization. Nonparametric identification results for the effects defined in Section \ref{subsec:estimands} each rely on some subset of exchangeability assumptions, each of which is implied by one or both of two causal models defined with respect to a causal directed acyclic graph (DAG): the nonparametric structural equation model with independent errors (NPSEM-IE) \citep{pearl1995causal} and the finest fully randomized causally interpretable structured tree graph (FFRCISTG) \citep{robins1986new}. The general definitions of these models are given in Section S1 of the supporting web materials. Here, I will first define the causal models with respect to a particular DAG, then state the relevant exchangeability assumptions that are implied by each under the corresponding DAG.

Consider first the DAG in Figure \ref{fig:DAG1}. 
\begin{figure}[h]
    \centering
    \begin{tikzpicture}[baseline={(A)}, 
    ->, line width=1pt]
    \tikzstyle{every state}=[draw=none]
    \node[shape=circle, draw] (C) at (0,0) {${\bf C}$};
    \node[shape=circle, draw] (A) at (2,0) {$A$};
    \node[shape=circle, draw] (M) at (4,0) {$M$};
    \node[shape=circle, draw] (Y) at (6,0) {$Y$};

      \path 	(C) edge (A)
                (C) edge  [bend right=30] (M)
                (C) edge  [bend right=45] (Y)
                (M) edge (Y)
                (A) edge (M)
                (A) edge  [bend left=45] (Y)
                          ;
    \end{tikzpicture}
    \caption{The standard mediation graph with a single intermediate variable, $M$.\label{fig:DAG1}}
\end{figure}
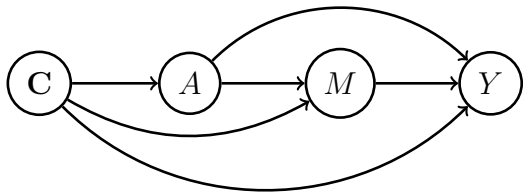
The NPSEM-IE corresponding this DAG is defined to be the system of structural equations: ${\bf C} = {\bf g_C}(\boldsymbol{\varepsilon}_{\bf C})$, $A = g_A({\bf C}, \varepsilon_A)$, $M = g_M({\bf C}, A, \varepsilon_M)$, and $Y = g_Y({\bf C}, A, M, \varepsilon_Y)$, where for each $V\in\{{\bf C}, A, M, Y\}$, $g_{V}$ is an unspecified (i.e., ``nonparametric'') function mapping to the domain of $V$, and $\varepsilon_{V}$ is an exogeneous error term. Further, the exogeneous error terms $\{\boldsymbol{\varepsilon}_{\bf C}, \varepsilon_A, \varepsilon_M, \varepsilon_Y\}$ are assumed to be mutually independent under the NPSEM-IE (hence ``independent errors''). Under a given intervention, these structural equations can be modified to generate equations for the counterfactuals under the intervention. First, the left-hand side of the equations for the descendants of the intervened-upon variables are replaced with their corresponding counterfactuals under the intervention. Second, the intervened-upon variables are replaced in the right-hand side of the structural equations with the level to which they have been set. Third, the descendants of the intervened-upon variables are replaced in the right-hand side of the structural equations with their corresponding counterfactual versions. For example, under the intervention setting $A$ to $a$, the structural equations would become ${\bf C} = {\bf g_C}(\boldsymbol{\varepsilon}_{\bf C})$, $A = a$, $M(a) = g_M({\bf C}, a, \varepsilon_M)$, and $Y(a) = g_Y\{{\bf C}, a, M(a), \varepsilon_Y\}$. In this way, it then becomes straightforward to determine which observed and counterfactual variables are independent under the NPSEM-IE. The NPSEM-IE can be equivalently defined as the set of all counterfactual probability distributions  for which the following sets of counterfactuals are all mutually independent: $\{{\bf C}\}$, $\{A({\bf c}) \mid {\bf c}\in\text{supp}({\bf C})\}$, $\{M({\bf c},a') \mid {\bf c}\in\text{supp}({\bf C}),a'\in\{a^*,a\}\}$, and $\{Y({\bf c},a',m) \mid {\bf c}\in\text{supp}({\bf C}),a'\in\{a^*,a\},m\in\text{supp}(M)\}$, where each of these counterfactuals is assumed to exist in addition to those described in Section \ref{subsec:estimands}.

The FFRCISTG corresponding to the DAG in Figure \ref{fig:DAG1} consists of all counterfactual probability distributions for which the counterfactuals in the set $\{{\bf C}, \allowbreak A({\bf c}), \allowbreak M({\bf c}, a'), \allowbreak Y({\bf c}, a', m)\}$ are all mutually independent for each $({\bf c}, a', m)\in \text{supp}({\bf C})\times\{a^*,a\}\times\text{supp}(M)$, but unlike the NPSEM-IE, are not necessarily independent across sets for different values of $({\bf c}, a', m)$. Independencies across such sets, e.g., $Y({\bf c},a,m)\ci M({\bf c},a^*)$, are known as \emph{cross-world counterfactual independencies}, which will be discussed further below. The independence assumptions imposed by the NPSEM-IE imply those of the FFRCISTG, hence the latter model contains the former.

Now, I will present the set of exchangeability assumptions implied by the above causal models that are used for the identification of the effects in the previous subsection. 
\begin{assumption}
    \label{assn:NUCA-AY}
    $Y(a',m)\ci A\mid {\bf C}$ for all $a'$ and $m$.
\end{assumption}
\begin{assumption}
    \label{assn:NUCA-MY}
    $Y(a',m)\ci M\mid {\bf C},A=a'$ for all $a'$ and $m$.
\end{assumption}
\begin{assumption}
    \label{assn:NUCA-AM}
    $M(a')\ci A\mid {\bf C}$ for all $a'$.
\end{assumption}
\begin{assumption}
    \label{assn:NPSEM}
     $Y(a,m)\ci M(a^*)\mid {\bf C}$ for all $m$.
\end{assumption}
The first three conditional independencies are implied by both the NPSEM-IE and the FFRCISTG corresponding to the DAG in Figure \ref{fig:DAG1}. These are relatively uncontroversial and correspond to typical exchangeability/no-unobserved-confounding assumptions provided all elements of ${\bf C}$ are unaffected by the exposure. Assumption \ref{assn:NUCA-AY} states that there is no unobserved confounding of the effect of $A$ on $Y$ given ${\bf C}$ when $M$ is set to the level $m$. Assumption \ref{assn:NUCA-MY} states that there is no unobserved confounding of the effect of $M$ on $Y$ given ${\bf C}$ and $A$. Assumption \ref{assn:NUCA-AM} states that there is no unobserved confounding of the effect of $A$ on $M$ given ${\bf C}$. Assumptions \ref{assn:NUCA-AY} and \ref{assn:NUCA-AM} will be enforced when $A$ is randomized. Assumption \ref{assn:NUCA-MY} will be enforced when both $A$ and $M$ are randomized; however, it is not sensible to jointly randomize $A$ and $M$ if an indirect effect is the effect of interest, as one would no longer be able to observe the relationship between $A$ and $M$, which is a key component of the indirect effect. 

The fourth conditional independence is implied by the NPSEM-IE corresponding to the DAG in Figure \ref{fig:DAG1}, but not the corresponding FFRCISTG. It is one of the main sources of controversy for indirect effects, for two distinct reasons. The first is that this assumption prohibits any of the elements of ${\bf C}$ from being affected by $A$. This means that all confounders of the effect of $M$ on $Y$ must either be pre-exposure or unaffected by $A$. While this assumption can be tested, it cannot be enforced experimentally, and so without strong alternative assumptions, the investigator has no control over whether this holds. The second is that this assumption is a so-called \emph{cross-world counterfactual independence} assumption, meaning that when $a'\neq a''$, $Y(a',m)$ and $M(a'')$ cannot be observed simultaneously for the same individual since these arise from conflicting interventions. As a consequence, this independence can neither be enforced experimentally, nor falsified empirically. \cite{robins2010alternative}, \cite{tchetgen2014bounds}, and \cite{miles2017partial} provide bounds on the natural indirect effect when this assumption is not satisfied. 

The consistency assumption links counterfactuals to observed variables and holds under both an NPSEM-IE and an FFRCISTG.
\begin{assumption}[Consistency]
    \label{assn:consistency}
    If $A=a'$, then $M=M(a')$ almost surely. If $A=a'$ and $M=m$, then $Y=Y(a',m)$ almost surely.
\end{assumption}

The positivity assumption ensures that the identifying functionals are well-defined and that there is statistical support to estimate such quantities. For simplicity, we will consider the strongest form of the positivity assumption; however, this can be relaxed for some identification results. 
\begin{assumption}[Positivity]
    \label{assn:positivity}
    We have $0< \mathrm{Pr}(A=a\mid {\bf C})<1$ almost surely, 
    and for all $a'\in\{a^*,a\}$ and all $m$ in either the support of $M$ or the support of $M(a')$, $f_{M\mid {\bf C},A}(m\mid a',{\bf C})>0$ almost surely.
\end{assumption}
Unlike Assumptions \ref{assn:NUCA-AY}--\ref{assn:consistency}, the positivity assumption is only stated in terms of the observed data distribution. Of course, this implies constraints on the counterfactual distribution due to the connection between the two under the consistency assumption. While the positivity assumption for the total effect (also referred to as overlap) has received a fair amount of attention in the literature, unfortunately the same cannot be said of this assumption for mediated effects. As positivity for mediation is not the focus of this article, we direct the reader to \cite{nguyen2022clarifying} for a more in-depth discussion.

Under a slightly weaker form of Assumptions \ref{assn:NUCA-AY}, \ref{assn:consistency}, and \ref{assn:positivity}, the total effect is nonparametrically identified by
\[\Psi^{TE}(P) \equiv E\left\{E(Y\mid a, {\bf C})\right\} - E\left\{E(Y\mid a^*, {\bf C})\right\},\]
where $P$ denotes the distribution function of the observed data $({\bf C}^T,A,M,Y)^T$. Under Assumptions \ref{assn:NUCA-AY}, \ref{assn:NUCA-MY}, \ref{assn:consistency}, and \ref{assn:positivity}, the controlled direct effect is nonparametrically identified by 
\[\Psi^{CDE}(P;m) \equiv E\left\{E(Y\mid m, a, {\bf C})\right\} - E\left\{E(Y\mid m, a^*, {\bf C})\right\},\]
and the portion eliminated by $\Psi^{PE}(P;m) \equiv \Psi^{TE}(P) - \Psi^{CDE}(P;m)$. 
Under Assumptions \ref{assn:NUCA-AY}--\ref{assn:positivity}, the natural indirect effect is nonparametrically identified by
\[\Psi^{NIE}(P) \equiv E\left\{E\left(Y\mid a, {\bf C}\right)\right\} - E\left[E\left\{E\left(Y\mid M, a, {\bf C}\right)\mid a^*, {\bf C}\right\}\right],\]
which is often referred to as the mediation formula, and the natural direct effect is nonparametrically identified by $\Psi^{NDE}(P) \equiv \Psi^{TE}(P) - \Psi^{NIE}(P)$. Under Assumptions \ref{assn:NUCA-AY}--
\ref{assn:positivity}, the NIE$^\text{R}$ is also nonparametrically identified by $\Psi^{NIE}(P)$. If all of these assumptions except Assumptions \ref{assn:NUCA-MY} and \ref{assn:NPSEM} hold, and Assumption \ref{assn:NUCA-MY} is replaced by an analogous assumption allowing for exposure-induced confounders (Assumption \ref{assn:NUCA-MY-L} in Section \ref{sec:recanting-witness}), the NIE$^\text{R}$ remains nonparametrically identified, but by a distinct formula, which is given in Section \ref{sec:recanting-witness}.

\section{Formalizing desiderata of indirect effect measures}
\label{sec:snc}

When considering total effects, the sharp causal null on a given population of interest (e.g., the sample population or superpopulation from which the sample was drawn) is defined to be the null hypothesis $H_0:Y_i(a)=Y_i(a^*)$ for all units $i$ in the population, i.e., there is no individual-level causal effect for any unit in the population. In defining total effect measures, it is self-evident that such measures should take their null value (e.g., zero on the difference scale) whenever the sharp causal null holds on the population of interest. Indeed, this is the case for the total effect. I will now extend this notion to measures of indirect effects.

First, we must define the sharp mediational null. To do so, we must formalize what we mean when we speak of mediation for an individual unit. 
I would argue that the statement ``the effect of an exposure $A$ on an individual's outcome $Y$ is mediated by an intermediate event $M$'' is widely understood 
to mean that 
(a) an intervention changing $A$ leads to a change in $M$, and (b) this induced change in $M$ leads to a change in $Y$. This interpretation of mediation is centered on the question of whether $M$ plays a role in the \emph{mechanism} by which $A$ affects $Y$. Indeed, the introductory chapter of \cite{vanderweele2015explanation} states, ``The phenomenon whereby a cause affects an intermediate and the change in the intermediate goes on to affect the outcome is what is generally referred to as the phenomenon of `mediation'[...]'' Thus, practitioners using effects whose interpretations differ from this characterization of mediation should either avoid interpreting such effects as mediational, or should be explicit that their characterization of mediation differs from this predominant understanding, and clarify what they mean when using the term ``mediation''. 

Formalizing this in terms of counterfactuals for a given unit $i$, (a) amounts to $M_i(a)\neq M_i(a^*)$, and the weakest form of (b) would be $Y_i\{a',M_i(a)\}\neq Y_i\{a',M_i(a^*)\}$ for either $a'=a$ or $a'=a^*$. Clearly, the latter set of inequations implies the former, so these suffice on their own for existence of an individual-level indirect effect. This corresponds to the qualitative definition of unit-level indirect effects of \cite{pearl2001direct} for at least one choice of the exposure reference level. That is, if there is a qualitative unit-level indirect effect according to Pearl for one reference exposure level, say $a^*$ (but not necessarily the other), we would claim there is an indirect effect at the individual level. Conversely, there is no individual-level indirect effect when $Y_i\{a',M_i(a)\} = Y_i\{a',M_i(a^*)\}$ for both $a'=a$ and $a'=a^*$. I define the sharp mediational null hypothesis for a given population as follows:
\begin{definition}[Sharp mediational null]
    $H_0: Y_i\{a',M_i(a)\} = Y_i\{a',M_i(a^*)\}$ for both $a'=a$ and $a'=a^*$ for each $i$ in the population of interest.
\end{definition}
While it is necessary for $A$ to affect $M$ and $M$ to affect $Y$ for there to be mediation at the individual level, this alone is not sufficient according to this definition. For instance, we could have trichotomous $M$ with support $\{1,2,3\}$, and for subject $i$, $M_i(a)=1$, $M_i(a^*)=2$, and $Y_i(a,1)=Y_i(a^*,1)=Y_i(a,2)=Y_i(a^*,2)$, but $Y_i(a,1)\neq Y_i(a,3)$. Thus, in this example $A$ affects $M$ and $M$ affects $Y$ (between $M=1$ and $3$ and between $M=2$ and $3$) for unit $i$, but the change in $M$ induced by the change in $A$ does not induce a change in $Y$. In this example, $Y_i(a,3)$ would not be naturally observed, and would only be realized for subject $i$ under a joint intervention setting $A_i$ to $a$ and $M_i$ to 3. This is analogous to how if $A_i=a$, then $Y_i(a^*)$ would only be realized under an intervention setting $A_i$ to $a^*$. 

If one would consider this example to in fact be a case of mediation at the individual level, then one would arrive at a distinct sharp null. That is, one might instead interpret mediation as meaning that (a) an intervention changing $A$ leads to a change in $M$, and (b) an intervention changing $M$ leads to a change in $Y$, rather than the change in $M$ in (b) needing to be induced by the change in $A$ in (a). I define the corresponding ``sharper'' mediational null as follows:
\begin{definition}[Sharper mediational null]
    $H_0$: For each $i$ in the population of interest, either $M_i(a)=M_i(a^*)$ or $Y_i(a',m) = Y_i(a',m')$ for all $a'$, $m$, and $m'$.
\end{definition}
If one were to object to the existence of so-called cross-world counterfactuals, i.e., nested counterfactuals of the form $Y_i\{a',M_i(a'')\}$ with $a'\neq a''$, then one might prefer the above definition. Since $Y_i(a',m) = Y_i(a',m')$ must hold for all $m$ and $m'$ when $M_i(a)\neq M_i(a^*)$ and not just for $m=M_i(a)$ and $m'=M_i(a^*)$, clearly the sharper mediational null implies the sharp mediational null, hence the term ``sharper''.


I now define the mediational analogs to the previously-described desideratum of total effect measures taking their null value under the sharp(er) causal null.
\begin{definition}[Sharp(er) null criterion]
    An indirect effect measure satisfies the \emph{sharp(er) null criterion} if it is null whenever the sharp(er) mediational null holds.
\end{definition}
Since the class of distributions satisfying the sharp mediational null contains the class of distributions satisfying the sharper mediational null, an effect measure that satisfies the sharp null criterion will also satisfy the sharper null criterion.

Unfortunately, individual-level indirect effects, like individual-level total/overall effects, are never identified. This is due to the fact that both counterfactual mediators cannot be observed simultaneously for the same subject, hence the value of $M(a')$ will be unknown for at least one of the two nested counterfactuals. Thus, population-level effects such as those defined in Section \ref{sec:prelims} are generally the causal contrasts of interest in mediation analysis. 

The NIE satisfies the sharp null criterion (and, hence, the sharper null criterion) since under the mediational sharp null, $Y_i\{a, M_i(a)\} = Y_i\{a, M_i(a^*)\}$ for all $i$, hence $E[Y\{a, M(a)\}] - E[Y\{a, M(a^*)\}]=0$. Since the NIE$^\text{R}$ is equal to the NIE under Assumptions \ref{assn:NUCA-AY}--\ref{assn:positivity}, the NIE$^\text{R}$ will also satisfy the sharp null criterion (and, hence, the sharper null criterion) when these assumptions all hold. On the other hand, $\text{PE}(m)$ (defined in Section \ref{sec:prelims}.1) satisfies neither the sharp null nor the sharper null criterion under its corresponding identifying Assumptions \ref{assn:NUCA-AY}, \ref{assn:NUCA-MY}, \ref{assn:consistency}, and \ref{assn:positivity}, 
which is why it is not considered to be a valid indirect effect measure. The reason it does not satisfy these criteria is because when there is interaction between $A$ and $M$ on the additive scale in their effect on $Y$, the controlled direct effect and total effect will differ for certain levels of $m$ even when $A$ has no effect on $M$ for any unit \citep{vanderweele2009mediation}. For instance, consider the following simple example: for each $a'\in\{a^*,a\}$, $M(a')\sim \mathrm{Bernoulli}(p)$ with $0<p<1$ such that $M_i(a^*)=M_i(a)$ for all $i$, and $Y(a',m)=a'm$. Then $\text{TE}=(a-a^*)p$ and $\text{CDE}(m)=(a-a^*)m$, so $\text{PE}(m)=(a-a^*)(p-m)$, which is nonzero for both $m=0$ and $m=1$ even though $A$ does not affect $M$ for any unit. Essentially, the portion eliminated 
can detect mediation \emph{or} interaction, but cannot distinguish between the two.

In addition to the sharp(er) null criterion, which ensures an indirect effect measure correctly detects when a mediational effect does not exist, one might also demand of an indirect effect measure that it be in the correct direction when individual-level mediational effects all agree in direction, i.e., qualitatively. Analogously to the monotonicity assumption often invoked to identify the local average treatment effect using instrumental variables, we may also define a mediational version of monotonicity. When an instrumental variable, say $Z$, is available, the effect of $Z$ on $A$ is said to be \emph{monotonic} if for all $z^*<z$, either $A_i(z^*)\leq A_i(z)$ for all $i$ or $A_i(z^*)\geq A_i(z)$ for all $i$, where $A_i(z')$ is the counterfactual exposure under an intervention setting $Z_i$ to $z'$ \citep{imbens1994identification}. One could likewise define the effect of $A$ on $Y$ to be \emph{monotonic} if $Y_i(a^*)\leq Y_i(a)$ for all $i$ or $Y_i(a^*)\geq Y_i(a)$ for all $i$. I now define a mediational analog to the monotonicity assumption.
\begin{definition}[Mediational monotonicity]
    A counterfactual distribution satisfies \emph{mediational monotonicity} if for each $a'\in\{a^*,a\}$, either (a) $Y_i\{a',M_i(a)\} \leq Y_i\{a',M_i(a^*)\}$ for each $i$ in the population of interest or (b) $Y_i\{a',\allowbreak M_i(a)\}\allowbreak \geq\allowbreak Y_i\{a',\allowbreak M_i(a^*)\}$ for each $i$ in the population of interest.
\end{definition}

The following property formalizes the notion that an indirect effect measure ought to be in the correct direction under mediational monotonicity.
\begin{definition}[Monotonicity criterion]
    \label{def:mono-criterion}
    An indirect effect measure satisfies the \emph{monotonicity criterion} if it is no greater than its null value (e.g., zero for an effect measure on the difference scale) whenever version (a) of mediational monotonicity holds and no less than its null value whenever version (b) of mediational monotonicity holds.
\end{definition}
Clearly, the NIE satisfies the monotonicity criterion since under monotonicity, either $Y_i\{a,M_i(a)\} \leq Y_i\{a,M_i(a^*)\}$ for all $i$, in which case $E[Y\{a,M(a)\}] \allowbreak - E[Y\{a,M(a^*)\}] \allowbreak \leq \allowbreak 0$, or $Y_i\{a,M_i(a)\} \geq Y_i\{a,M_i(a^*)\}$ for all $i$, in which case $E[Y\{a,M(a)\}] - E[Y\{a,M(a^*)\}] \geq 0$. An effect measure satisfying the monotonicity criterion will also satisfy the sharp null criterion (and in turn, the sharper null criterion) since the sharp mediational null implies both (a) and (b) in the definition of mediational monotonicity, and the monotonicity criterion implies the effect can be neither less than nor greater than its null value. 

For the remainder of the article, I will refer to the sharp null, sharper null, and monotonicity criteria collectively as the \emph{indirect effect measure criteria}, though these are not meant to be exhaustive, and additional desiderata could certainly be defined. In the following two sections, I will discuss whether the NIE$^\text{R}$ satisfies these criteria 
under alternative sets of assumptions.

\section{Indirect effects in the presence of an exposure-induced confounder}
\label{sec:recanting-witness}
One of the implications of Assumption \ref{assn:NPSEM} is the proscription of the existence of an exposure-induced confounder. However, this assumption is not necessary for the nonparametric identification of the NIE$^{\text{R}}$. We will now consider satisfaction of the indirect effect measure criteria under an NPSEM-IE in the presence of an exposure-induced confounder ${\bf L}$ as depicted in the DAG in Figure \ref{fig:DAG2}.
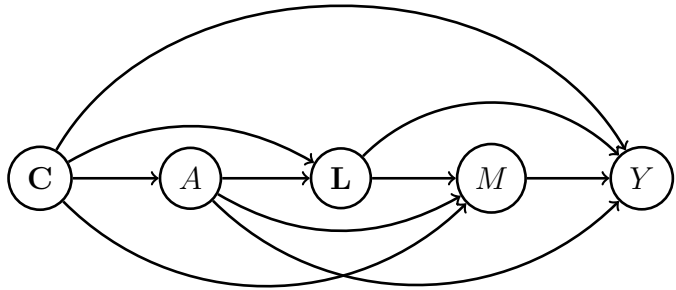
\begin{figure}[h]
    \centering
    \begin{tikzpicture}[baseline={(A)}, 
    ->, line width=1pt]
    \tikzstyle{every state}=[draw=none]
    \node[shape=circle, draw] (C) at (-2,0) {${\bf C}$};
    \node[shape=circle, draw] (A) at (0,0) {$A$};
    \node[shape=circle, draw] (L) at (2,0) {${\bf L}$};
    \node[shape=circle, draw] (M) at (4,0) {$M$};
    \node[shape=circle, draw] (Y) at (6,0) {$Y$};

      \path 	(C) edge (A)
                (C) edge [bend left=30] (L)
                (C) edge [bend right=45] (M)
                (C) edge [bend left=60] (Y)
                (A) edge (L)
                (A) edge [bend right=30] (M)
                (A) edge [bend right=45] (Y)
                (M) edge (Y)
                (L) edge (M)
                (L) edge  [bend left=45] (Y)
                          ;
    \end{tikzpicture}
    \caption{The mediation DAG with an exposure-induced confounder, ${\bf L}$.\label{fig:DAG2}}
\end{figure}

The NPSEM-IE corresponding to the DAG in Figure \ref{fig:DAG2} is defined to be the system of structural equations ${\bf C} = {\bf g_C}(\boldsymbol{\varepsilon}_{\bf C})$, $A = g_A({\bf C}, \varepsilon_A)$, ${\bf L} = {\bf g_L}({\bf C}, A, \boldsymbol{\varepsilon_L})$, $M = g_M({\bf C}, A, {\bf L}, \varepsilon_M)$, and $Y = g_Y({\bf C}, A, {\bf L}, M, \varepsilon_Y)$, where the functions $g_{V}$ and exogeneous error terms $\boldsymbol{\varepsilon}_V$ are analogously defined as in the NPSEM-IE corresponding the DAG in Figure \ref{fig:DAG1}. 
Assumptions \ref{assn:NUCA-AY} and \ref{assn:NUCA-AM} are still implied under this model, but Assumption \ref{assn:NUCA-MY} is not. Instead, it can be replaced by the following assumption, which does hold under this model:
\begin{assumption}
    \label{assn:NUCA-MY-L}
    $Y(a',m)\ci M\mid {\bf L}, {\bf C},A=a'$ for all $a'$ and $m$.
\end{assumption}
While Assumption \ref{assn:NPSEM} does not hold under this model, other cross-world counterfactual assumptions are implied. Under this model 
and Assumption \ref{assn:positivity}, the NIE$^{\mathrm{R}}$ is nonparametrically identified by
\begin{align*}
    \Psi^{\text{NIE}^{\text{R}}}_{L}(P)\equiv E\bigg[\int_m \int_{\boldsymbol{\ell}} & E(Y\mid m,\boldsymbol{\ell},a,{\bf C})dF_{{\bf L}\mid A, {\bf C}}(\boldsymbol{\ell}\mid a, {\bf C})\\
    &\times \{dF_{M\mid A,{\bf C}}(m\mid a,{\bf C}) - dF_{M\mid A,{\bf C}}(m\mid a^*,{\bf C})\}\bigg].
\end{align*}
When Assumption \ref{assn:NPSEM} does hold, this reduces to $\Psi^{\text{NIE}}(P)$. 

Despite being identified, we have the following theorem regarding the indirect effect measure criteria:
\begin{theorem}
    \label{thm:recanting-SNC}
    Under the NPSEM-IE corresponding to the DAG in Figure \ref{fig:DAG2} and Assumption \ref{assn:positivity}, the NIE$^{\mathrm{R}}$ does not satisfy any of the indirect effect measure criteria. 
\end{theorem}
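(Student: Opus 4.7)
The plan is to prove Theorem~\ref{thm:recanting-SNC} by exhibiting a single counterexample---a specific NPSEM-IE compatible with the DAG in Figure~\ref{fig:DAG2} and satisfying Assumption~\ref{assn:positivity}---in which the sharper mediational null holds on every unit in the population, yet $\text{NIE}^{\mathrm{R}}\neq 0$. Because the sharper null implies the sharp null, and both versions (a) and (b) of mediational monotonicity are trivially satisfied whenever the sharper null holds (each reduces to an equality), a nonzero $\text{NIE}^{\mathrm{R}}$ in this example simultaneously refutes all three criteria: the sharp and sharper null criteria require the effect to vanish, while the monotonicity criterion would require it to be both non-positive and non-negative.

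For the construction I would take $\mathbf{C}$ empty and $A,\mathbf{L},M,Y$ all binary with $\{a^*,a\}=\{0,1\}$, mutually independent exogenous errors, and let $\mathbf{L}=A\cdot \varepsilon_{\mathbf{L}}$ so that individuals split into two latent types indexed by $\varepsilon_{\mathbf{L}}$: on $\{\varepsilon_{\mathbf{L}}=0\}$, $\mathbf{L}(a)=\mathbf{L}(a^*)=0$, and on $\{\varepsilon_{\mathbf{L}}=1\}$, $\mathbf{L}(a)=1$ while $\mathbf{L}(a^*)=0$. I would then choose $g_M(A,\mathbf{L},\varepsilon_M)=\varepsilon_M$ when $\mathbf{L}=0$ and $g_M=1$ when $\mathbf{L}=1$, which forces $M_i(a)=M_i(a^*)$ on the first type; and $g_Y(a^*,0,m,\cdot)=0$, $g_Y(a,1,m,\cdot)=1$, $g_Y(a,0,m,\cdot)=m$, so that $Y\{a',m\}$ is constant in $m$ for both $a'$ on the second type, while $Y$ genuinely depends on $m$ in the observationally relevant cell $(A,\mathbf{L})=(a,0)$, which is realized by the first type with positive probability.

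Verification then proceeds in four steps: (i) the NPSEM-IE holds by construction; (ii) positivity of $M\mid A$ reduces to $p_M:=P(\varepsilon_M=1)\in(0,1)$, easily arranged; (iii) the sharper null is verified per individual by case analysis on $(\varepsilon_{\mathbf{L}},\varepsilon_M)$---the only case in which $M_i(a)\neq M_i(a^*)$ is $\varepsilon_{\mathbf{L}}=1,\varepsilon_M=0$, and on that cell $Y\{a,m\}$ and $Y\{a^*,m\}$ are both constant in $m$ by design; and (iv) a direct evaluation of $\Psi^{\text{NIE}^{\text{R}}}_{L}(P)$ yields $(1-p_M)/4$, strictly positive for any $p_M\in(0,1)$. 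The main obstacle---and the real content of the construction---is that the NPSEM-IE forbids coordinating the ``$M$-inert'' and ``$Y$-inert in $m$'' subpopulations via any shared latent cause of $M$ and $Y$ (error independence would be violated), and coordinating them via a baseline covariate $\mathbf{C}$ does not help because $\Psi^{\text{NIE}^{\text{R}}}_{L}(P)$ conditions on $\mathbf{C}$ and the contribution then vanishes stratum by stratum. The resolution is to use the exogenous error $\varepsilon_{\mathbf{L}}$ of the exposure-induced confounder itself as the coordinating variable; the payoff is that the observed conditional mean $E(Y\mid m,\boldsymbol{\ell},a,\mathbf{C})$ can depend on $m$ even when $Y$ is constant in $m$ given $\varepsilon_{\mathbf{L}}$, because the posterior of $\varepsilon_{\mathbf{L}}$ given $(M,\mathbf{L},A,\mathbf{C})$ depends on $m$. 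This Bayesian updating through the exposure-induced confounder is precisely what makes $\Psi^{\text{NIE}^{\text{R}}}_{L}(P)$ nonzero despite the sharper null.
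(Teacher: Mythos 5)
Your proposal is correct and is essentially the paper's own argument: a single explicit NPSEM-IE for the DAG in Figure \ref{fig:DAG2} in which the exogenous error $\varepsilon_{\bf L}$ of the exposure-induced confounder acts as a binary switch making the units for whom $A$ affects $M$ disjoint from those for whom $M$ affects $Y$, so the sharper (hence sharp) mediational null and both directions of mediational monotonicity hold unit-by-unit while $\text{NIE}^{\text{R}}\neq 0$, refuting all three criteria at once. One small repair to step (iv): since $\Pr(M=0\mid {\bf L}=1,A=a)=0$ in your construction, $\Psi^{\text{NIE}^{\text{R}}}_{L}(P)$ involves a conditional mean on a null event, so you should verify nonnullity by computing the counterfactual quantity directly, $\text{NIE}^{\text{R}}=\int E\{Y(a,m)\}\{dF_{M(a)}(m)-dF_{M(a^*)}(m)\}=q(1-q)(1-p_M)$ with $q=\Pr(\varepsilon_{\bf L}=1)$, which equals your $(1-p_M)/4$ at $q=1/2$; relatedly, the closing ``posterior of $\varepsilon_{\bf L}$ given $m$'' intuition does not actually describe this example (under $A=a$, ${\bf L}$ reveals $\varepsilon_{\bf L}$ exactly), though nothing in the proof depends on it.
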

All proofs are provided in Section S4 of the supporting web materials. The proof of Theorem \ref{thm:recanting-SNC} gives a counterexample of a counterfactual distribution satisfying the NPSEM-IE corresponding to the DAG in Figure \ref{fig:DAG2} as well as the sharp(er) mediational null, but under which the NIE$^{\mathrm{R}}$ is nonzero. While the exact details of the counterexample are unimportant, it does have some salient features that are helpful for understanding why the NIE$^{\mathrm{R}}$ does not satisfy the sharp(er) null criterion and how far from its null value it can be under the sharp(er) mediational null. First, the exogeneous error term corresponding to $L$, $\varepsilon_L$, is a Bernoulli random variable. When $\varepsilon_L=0$, $M$ affects $Y$, but there is no effect of $A$ on $M$, i.e., $M(a^*)=M(a)$. When $\varepsilon_L=1$, $A$ affects $M$; however, $M$ has no effect on $Y$, i.e., $Y(a,1)=Y(a,0)$ and $Y(a^*,1)=Y(a^*,0)$. 
Thus, while $A$ will affect $M$ for some units, and $M$ will affect $Y$ for others, there is no unit for whom both occur, 
and so the sharper mediational null holds, which in turn implies the sharp mediational null holds. This is a consequence of the presence of an interaction between $L$ and $A$ on $M$ and between $L$ and $M$ on $Y$. The other salient feature of the counterexample is that $Y$ is a Bernoulli random variable, and depending on the probability parameters of $\varepsilon_L$ and $\varepsilon_M$, which are both Bernoulli, the NIE$^{\mathrm{R}}$ can be arbitrarily close to -1/4 or 1/4, as shown in the proof. 

Hypothetically, if ${\bf L}$ were not affected by $A$, ${\bf L}$ would necessarily be included in ${\bf C}$ in order to satisfy Assumption \ref{assn:NUCA-MY}, and the NIE$^{\mathrm{R}}$ would in fact satisfy the indirect effect measure criteria. This suggests that when $A$ does in fact affect $\bf L$, one might be able to recover an identifiable effect satisfying the indirect effect measure criteria by replacing the intervention setting $M$ to $G(a')$ with an intervention setting $M$ to either (a) $G(a'\mid {\bf C}, {\bf L})$, where $G(a'\mid {\bf C}, {\bf L})\sim M(a')\mid {\bf C}, {\bf L}$ and $G(a'\mid {\bf C}, {\bf L})\ci Y(a,m)\mid {\bf C}, {\bf L}$, or (b) $G\{a'\mid {\bf C}, {\bf L}(a')\}$, where $G\{a'\mid {\bf C}, {\bf L}(a')\}\sim M(a')\mid {\bf C}, {\bf L}(a')$ and $G\{a'\mid {\bf C}, {\bf L}(a')\}\ci Y(a,m)\mid {\bf C}, {\bf L}(a')$. Section S2 of the supporting web materials presents results showing that this is unfortunately not the case for either intervention.

One might take the view that the counterexample in the proof of Theorem \ref{thm:recanting-SNC} is a contrived pathological example and unlikely to arise in practice. Indeed, \cite{vanderweele2017mediation} appear to have anticipated such a counterexample:
\begin{quote}
    [...]when natural direct and indirect effects are not identified, it will only be in extremely unusual settings that the interventional analogue is non-zero, with there being no natural indirect effects. For that to occur, it would be necessary that the exposure affects the mediator for a set of individuals that is completely different from those for whom the mediator affects the outcomes, i.e., there is no overlap in those for whom exposure affects the mediator and for whom the mediator affects the outcome.
\end{quote}
However, if one wishes to rule out such a counterexample, one must impose further assumptions on the model. Indeed, there exist assumptions under which the NIE$^{\mathrm{R}}$ does satisfy the sharp null criterion. A trivial example of such an assumption is that any of the edges connected to ${\bf L}$ apart from the edge from ${\bf C}$ to ${\bf L}$ in the DAG in Figure \ref{fig:DAG2} are not present, such that ${\bf L}$ is no longer an exposure-induced confounder, as was discussed in Section \ref{sec:prelims}.

Another such assumption is suggested by the remark from \cite{vanderweele2017mediation} above, viz., an assumption that rules out the possibility of no overlap in those for whom exposure affects the mediator and for whom the mediator affects the outcome. The following result states that while one formulation of such an assumption does recover satisfaction of the sharp(er) null criterion by the NIE$^{\text{R}}$, it still fails to satisfy the monotonicity criterion.
\begin{theorem}
    \label{thm:overlap}
    Suppose that if $M_i(a)\neq M_i(a^*)$ for some $i$ and $Y_j(a,m)\neq Y_j(a,m')$ for some $j$ and $m\neq m'$, 
    then $Y_{k}\{a,M_k(a^*)\}\neq Y_{k}\{a,M_k(a)\}$ for some $k$. 
    Then under the NPSEM-IE corresponding to the DAG in Figure \ref{fig:DAG2} and Assumption \ref{assn:positivity}, the $\text{NIE}^{\mathrm{R}}$ satisfies the sharp null and sharper null criteria, but not the monotonicity criterion.
\end{theorem}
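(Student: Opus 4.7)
The plan is to split the proof into two parts: verify satisfaction of the sharp null and sharper null criteria, then construct a counterexample showing the monotonicity criterion fails.

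For the sharp null criterion I would argue by the contrapositive of the theorem's hypothesis. Under the sharp mediational null, $Y_k\{a,M_k(a)\} = Y_k\{a,M_k(a^*)\}$ for every $k$, so the consequent of the hypothesized implication is false, and hence its antecedent must be false: either (i) $M_i(a) = M_i(a^*)$ for all $i$, or (ii) $Y_j(a,m) = Y_j(a,m')$ for all $j$ and all $m \neq m'$. In case (i), Assumption \ref{assn:NUCA-AM} together with consistency gives $F_{M \mid A, {\bf C}}(\cdot \mid a, {\bf C}) = F_{M \mid A, {\bf C}}(\cdot \mid a^*, {\bf C})$ almost surely, so the bracketed mediator-distribution difference in $\Psi^{\text{NIE}^{\text{R}}}_{L}(P)$ vanishes identically. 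In case (ii), Assumptions \ref{assn:NUCA-AY} and \ref{assn:NUCA-MY-L} combined with consistency and the NPSEM-IE structure give $E(Y \mid m, \boldsymbol{\ell}, a, {\bf C}) = E\{g_Y({\bf C}, a, \boldsymbol{\ell}, m, \varepsilon_Y) \mid {\bf C}\}$, which is constant in $m$ because $Y(a,\cdot)$ is a constant function of $m$ for every unit; pulling this factor outside the inner integral leaves $\int_m \{dF_{M \mid A, {\bf C}}(m \mid a, {\bf C}) - dF_{M \mid A, {\bf C}}(m \mid a^*, {\bf C})\} = 0$. In either case $\Psi^{\text{NIE}^{\text{R}}}_{L}(P) = 0$, so the sharp null criterion holds; and since the sharper mediational null implies the sharp mediational null, the sharper null criterion follows immediately.

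For the monotonicity criterion I would construct an explicit counterexample under the NPSEM-IE of Figure \ref{fig:DAG2}. The guiding intuition is that NIE$^{\text{R}}$ evaluates $E(Y \mid m, \boldsymbol{\ell}, a, {\bf C})$ against an ${\bf L}$-margin and a mediator-distribution difference that are independent rather than tied to the same unit, so an interaction between ${\bf L}$ and $M$ in the outcome equation can produce an effect of the opposite sign to uniformly signed individual-level contrasts $Y_i\{a', M_i(a)\} - Y_i\{a', M_i(a^*)\}$. A workable template uses a binary latent type $\varepsilon_L$ and a ternary ${\bf L}$ whose realized level uniquely encodes (type, $A$) on the observed support: a ``signal'' type for whom $A$ positively affects $M$ and $Y$ is increasing in $m$, so individual IEs are nonnegative (giving direction (b) of mediational monotonicity together with the theorem's overlap hypothesis); and a ``noise'' type for whom $M_i(a) = M_i(a^*)$ (so individual IE is zero) but ${\bf L}(a) \neq {\bf L}(a^*)$ and $Y$ is strongly decreasing in $m$ at ${\bf L}(a)$. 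The noise type contributes a negative cross-term to $\Psi^{\text{NIE}^{\text{R}}}_{L}(P)$ because $G(a^*)$ is drawn from a marginal mediator distribution shifted toward lower values by the signal type's presence, while $E(Y \mid m, {\bf L}(a), a, {\bf C})$ for that noise stratum is decreasing in $m$; tuning the parameters so that this dominates the signal type's positive contribution makes $\Psi^{\text{NIE}^{\text{R}}}_{L}(P) < 0$. Small Bernoulli noise on $\varepsilon_M$ secures Assumption \ref{assn:positivity}.

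The main technical obstacle is calibrating the counterexample so that all constraints hold simultaneously: the overlap hypothesis, strict mediational monotonicity in the same direction under both $a' = a$ and $a' = a^*$, a sign-flipped NIE$^{\text{R}}$, strict positivity, and consistency with the NPSEM-IE functional structure. In particular, because the exogeneous error $\varepsilon_M$ can only see the type through its observed arguments $(A, {\bf L})$, the values of ${\bf L}$ under the two exposure levels must be chosen to encode the type uniquely on the observed joint support, motivating a ternary (rather than binary) ${\bf L}$; and preserving direction (b) monotonicity under $a' = a^*$, where the outcome is evaluated at ${\bf L}(a^*)$ rather than ${\bf L}(a)$, constrains the outcome function of the noise type under $A = a^*$ so as to avoid introducing individual-level sign violations there.
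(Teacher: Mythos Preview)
Your argument for the sharp and sharper null criteria is correct and essentially forced: under the sharp mediational null the consequent of the overlap hypothesis fails, so by contrapositive either $M_i(a)=M_i(a^*)$ for all $i$ or $Y_j(a,m)$ is constant in $m$ for all $j$, and in either case $\text{NIE}^{\mathrm{R}}=0$. Working through the identifying functional $\Psi^{\text{NIE}^{\text{R}}}_{L}(P)$ as you do is valid, though a slightly cleaner route argues directly from the definition of $G(a')$: in case (i), $G(a)$ and $G(a^*)$ share the same conditional law given ${\bf C}$, and in case (ii), $Y\{a,G(a)\}=Y\{a,G(a^*)\}$ pointwise since $Y(a,\cdot)$ does not depend on its mediator argument. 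This is the same logic the paper uses.

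For the monotonicity counterexample your two-type template is the right mechanism and matches the paper's construction in spirit. The key insight you articulate---that mediational monotonicity constrains only the unit-level contrast $Y_i\{a',M_i(a)\}-Y_i\{a',M_i(a^*)\}$, which is evaluated at each unit's own ${\bf L}_i(a')$, whereas $\Psi^{\text{NIE}^{\text{R}}}_{L}(P)$ pairs the \emph{marginal} law of ${\bf L}(a)$ with the \emph{marginal} mediator shift---is precisely what permits a sign reversal via an ${\bf L}$--$M$ interaction in the outcome. Your list of structural constraints is accurate: a multi-valued ${\bf L}$ so that $g_M$ can distinguish types through its observed arguments; $g_Y(a^*,{\bf L}(a^*),m,\cdot)$ nondecreasing in $m$ to preserve direction-(b) monotonicity under $a'=a^*$; and noise in $\varepsilon_M$ for positivity. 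What remains is to commit to explicit parameter values and verify numerically that $\Psi^{\text{NIE}^{\text{R}}}_{L}(P)<0$ while direction-(b) monotonicity holds with strict inequality for the signal type; this is routine (e.g., rare signal type with $M(a)\geq M(a^*)$ and $g_Y$ increasing in $m$, common noise type with $M(a)=M(a^*)$ and $g_Y(a,{\bf L}(a),m,\cdot)$ steeply decreasing in $m$), but the proposal as written stops at the blueprint stage.
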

Thus, if one were to rule out such a lack of overlap in the groups of individuals for whom each effect occurs, the $\text{NIE}^{\mathrm{R}}$ would indeed be useful for detecting the presence of an indirect effect. However, one could still not reliably learn the direction of such an effect, even if it were uniformly in the same direction (or null) for all individuals.

Section S3.1 of the supporting web materials gives an additional setting, wherein there is no mean ${\bf L}$--$M$ interaction on $Y$, under which the $\text{NIE}^{\mathrm{R}}$ satisfies the indirect effect measure criteria in the presence of an exposure-induced confounder.

\section{Indirect effects in the absence of cross-world counterfactual independencies}
\label{sec:no-recanting}
Recall that under Assumptions \ref{assn:NUCA-AY}--\ref{assn:NUCA-AM}, \ref{assn:consistency}, and \ref{assn:positivity}, 
the NIE$^{\text{R}}$ is nonparametrically identified by $\Psi^{\text{NIE}}(P)$. While the absence of an exposure-induced confounder is not necessary for the identification of the NIE$^{\text{R}}$, we have the following theorem regarding the indirect effect measure criteria in the absence of cross-world counterfactual independencies.
\begin{theorem}
    \label{thm:FFRCISTG-SNC}
    Suppose the counterfactual distribution follows the FFRCISTG corresponding to the DAG in Figure \ref{fig:DAG1}. Then the $\text{NIE}^{\text{R}}$ does not satisfy any of the indirect effect measure criteria.
\end{theorem}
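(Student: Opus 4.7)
The plan is to prove this theorem by constructing a single counterexample. As noted in Sections \ref{subsec:estimands} and \ref{subsec:assns}, Assumptions \ref{assn:NUCA-AY}--\ref{assn:NUCA-AM} together with consistency and positivity hold automatically under any FFRCISTG for the DAG in Figure \ref{fig:DAG1}, and identification of the NIE$^{\text{R}}$ by $\Psi^{\text{NIE}}(P)$ requires only these assumptions together with the built-in independence $G(a') \ci Y(a,m) \mid \mathbf{C}$ from the construction of $G(a')$---the cross-world Assumption \ref{assn:NPSEM} is not needed. It therefore suffices to produce a counterfactual distribution in the FFRCISTG for which the sharp mediational null holds at every individual, yet $\Psi^{\text{NIE}}(P) \neq 0$. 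For binary $M$ the sharp and sharper mediational nulls coincide, and since the sharp null implies both parts (a) and (b) of mediational monotonicity, such a single counterexample simultaneously refutes all three indirect effect measure criteria by the implications noted just after Definition \ref{def:mono-criterion}.

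For the construction, I take $\mathbf{C}$ degenerate, $A \sim \text{Bernoulli}(1/2)$ independent of all counterfactuals (so Assumptions \ref{assn:NUCA-AY}, \ref{assn:NUCA-AM}, and $A$-positivity hold trivially), and $M, Y$ both binary. The guiding idea is the ``no overlap'' scenario anticipated by \cite{vanderweele2017mediation} and featured in Theorem \ref{thm:recanting-SNC}: partition the population so that the set of individuals for whom $A$ changes $M$ is disjoint from the set for whom $M$ influences $Y$. On the first set I set $Y(a',\cdot)$ to be constant in $m$, so the sharp null holds despite $M(a) \neq M(a^*)$; on the second, $M(a) = M(a^*)$, so the sharp null is trivial. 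The observed-data mediation formula $\Psi^{\text{NIE}}(P)$ nevertheless registers a nonzero value, because the empirical $A$--$M$ association is carried by the first set while the empirical $M$--$Y$ association given $A=a$ is carried by the second.

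The main technical obstacle is the FFRCISTG's marginal independence requirement $M(a') \ci Y(a',m)$ at every $(a',m)$. The crudest two-subpopulation construction above forces $M(a)$ and $Y(a,1)$ to be perfectly correlated, violating the FFRCISTG at $(a,1)$. To repair this, I further split the ``$M$-affects-$Y$'' subpopulation into two equal-mass pieces, one with $M(a)=1$ and one with $M(a)=0$; the resulting symmetry makes $\mathrm{Cov}(M(a), Y(a,1)) = 0$ and secures the required independence. The remaining three FFRCISTG conditions at $(a,0)$, $(a^*,0)$, and $(a^*,1)$ are rendered trivial by arranging $Y$ to be identically zero on those three levels. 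A direct computation of $\Psi^{\text{NIE}}(P)$ from the induced observed-data law via the mediation formula then yields a strictly positive value---the natural choice of four equal-mass subpopulations delivers $\Psi^{\text{NIE}}(P) = 1/8$---which, combined with the preceding reduction, proves the theorem.
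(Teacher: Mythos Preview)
Your proposal is correct and takes essentially the same approach as the paper. The paper's counterexample sets $M(a)\sim\text{Bernoulli}(\pi)$ independent of $\{Y(a,0),Y(a,1)\}$ (the latter taking the four binary pairs with probabilities $\beta_1,\dots,\beta_4$) and then defines $M(a^*)=Y(a,0)Y(a,1)+M(a)\lvert Y(a,1)-Y(a,0)\rvert$, which encodes exactly your ``no overlap'' device: $M(a^*)=M(a)$ whenever $Y(a,\cdot)$ is nonconstant in $m$, and $Y(a,\cdot)$ is constant whenever $M(a^*)\neq M(a)$. Your four equal-mass subpopulations with $Y(a^*,\cdot)\equiv Y(a,0)\equiv 0$ are the special case $\pi=1/2$, $\beta_1=\beta_2=1/2$, $\beta_3=\beta_4=0$ of this family, and the paper's formula $\text{NIE}^{\text{R}}=\{(1-\pi)\beta_4-\pi\beta_1\}(\beta_3-\beta_2)$ indeed gives $1/8$ there; the paper additionally uses the free parameters to show the effect can approach $\pm 1/4$ under the sharp null.
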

The intuition behind the proof comes from the following inequation under the sharp mediational null:
\begin{align*}
    E\left[Y\{a,G(a)\}\right]
    &= E\left[Y\{a,M(a)\}\right] = E\left[Y\{a,M(a^*)\}\right]\\
    &= E\left[\int_m E\left\{Y(a,m)\mid M(a^*)=m, {\bf C}\right\}f_{M(a^*)\mid {\bf C}}(m\mid {\bf C})d\mu(m)\right]\\
    &\neq E\left[\int_m E\left\{Y(a,m)\mid {\bf C}\right\}f_{M(a^*)\mid {\bf C}}(m\mid {\bf C})d\mu(m)\right] = E\left[Y\{a,G(a^*)\}\right].
\end{align*}
in general, since $Y(a,m)\ci M(a^*)\mid {\bf C}$ is not implied by the FFRCISTG. 
However, the sharper mediational null does place constraints on the joint distribution of $Y(a,m)$ and $M(a^*)$, so a fully rigorous proof requires a counterexample to show that the sharper mediational null in addition to the FFRCISTG independence assumptions do not imply this cross-world counterfactual independence. Such a counterexample is provided in the proof in the supporting web materials.

\cite{robins1992identifiability} and \cite{robins2003semantics} showed that the NIE is identified under the FFRCISTG 
when there is no individual-level causal interaction between $A$ and $M$ on $Y$. 
The following theorem relaxes this assumption slightly to one of no interaction on a conditional mean scale. 
\begin{theorem}
    \label{thm:no-AM-interaction}
    Suppose there is no mean 
    causal interaction between $A$ and $M$ on $Y$ within all strata of $M(a^*)$ and $\bf C$ on the additive scale, i.e., 
    \[E\{Y(a,m')-Y(a,m'')-Y(a^*,m')+Y(a^*,m'')\mid M(a^*),{\bf C}\}=0\]
    almost surely for all $m'$ and $m''$. Then under the FFRCISTG corresponding to the DAG in Figure \ref{fig:DAG1}, the NIE is equivalent to the NIE$^{R}$ and the PE$(m)$ for all $m$, and is nonparametrically identified by $\Psi^{\text{NIE}}(P)$.
\end{theorem}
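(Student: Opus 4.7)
The plan is to first show that, under the FFRCISTG and the conditional-mean no-interaction assumption, the NDE coincides with the CDE$(m)$ at every level $m$, which forces NIE $=$ PE$(m)$ for every $m$. I would then close the loop by establishing NIE $=$ NIE$^{R}$ and invoking the identification of NIE$^{R}$ by $\Psi^{\mathrm{NIE}}(P)$ under the FFRCISTG already established in Section \ref{subsec:assns}, yielding nonparametric identification of NIE by the same mediation formula.

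For the NDE-to-CDE reduction, I would start from $\mathrm{NDE} = E[Y\{a, M(a^*)\}] - E[Y(a^*)]$, where the subtracted term uses composition. Conditioning on $(M(a^*), {\bf C})$ and recognising that $Y\{a', M(a^*)\}$ equals $Y(a', m)$ on $\{M(a^*) = m\}$ turns the inner conditional expectation into $E\{Y(a, m) - Y(a^*, m) \mid M(a^*), {\bf C}\}$ evaluated at $m = M(a^*)$. By the no-interaction assumption this expression does not depend on its $m$-argument, so it equals $E\{Y(a, m') - Y(a^*, m') \mid M(a^*), {\bf C}\}$ for any fixed $m'$ in the support of $M$. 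Iterated expectations then give $\mathrm{NDE} = E\{Y(a, m') - Y(a^*, m')\} = \mathrm{CDE}(m')$, and composition yields $\mathrm{NIE} = \mathrm{TE} - \mathrm{NDE} = \mathrm{PE}(m')$ for every $m'$.

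To finish, I show NIE $=$ NIE$^{R}$. By composition $E[Y\{a, M(a)\}] = E[Y(a)]$, and since $G(a) \sim M(a) \mid {\bf C}$ is drawn independently of $Y(a, m)$ given ${\bf C}$, the within-world FFRCISTG independency $Y(a, m) \ci M(a) \mid {\bf C}$ gives $E[Y\{a, G(a)\} \mid {\bf C}] = \int E\{Y(a,m) \mid {\bf C}\} f_{M(a) \mid {\bf C}}(m \mid {\bf C}) \, dm = E[Y(a) \mid {\bf C}]$ as well. It therefore suffices to verify $E[Y\{a, M(a^*)\}] = E[Y\{a, G(a^*)\}]$. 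Writing both conditionally on ${\bf C}$, their difference is
\begin{align*}
\int \bigl[ E\{Y(a,m) \mid M(a^*){=}m, {\bf C}\} - E\{Y(a,m) \mid {\bf C}\} \bigr] \, f_{M(a^*) \mid {\bf C}}(m \mid {\bf C}) \, dm.
\end{align*}
The no-interaction assumption reads $E\{Y(a, m) \mid M(a^*), {\bf C}\} = E\{Y(a^*, m) \mid M(a^*), {\bf C}\} + \delta(M(a^*), {\bf C})$ for some function $\delta$. Setting $M(a^*) = m$ on the left-hand conditional expectation and invoking the within-world independence $Y(a^*, m) \ci M(a^*) \mid {\bf C}$ to remove the $M(a^*)$ conditioning yields $E\{Y(a,m) \mid M(a^*){=}m, {\bf C}\} = E\{Y(a^*,m) \mid {\bf C}\} + \delta(m, {\bf C})$; marginalising the same decomposition over $M(a^*) \mid {\bf C}$ yields $E\{Y(a,m) \mid {\bf C}\} = E\{Y(a^*,m) \mid {\bf C}\} + E\{\delta(M(a^*), {\bf C}) \mid {\bf C}\}$. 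The integrand therefore collapses to $\delta(m, {\bf C}) - E\{\delta(M(a^*), {\bf C}) \mid {\bf C}\}$, which integrates to zero against $f_{M(a^*) \mid {\bf C}}$. Hence NIE $=$ NIE$^{R}$, and both coincide with $\Psi^{\mathrm{NIE}}(P)$.

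The main subtlety, and the step I expect to require the most care, is that the FFRCISTG does \emph{not} supply the cross-world independence $Y(a, m) \ci M(a^*) \mid {\bf C}$ that would make NIE $=$ NIE$^{R}$ immediate. The no-interaction assumption must therefore be leveraged in tandem with the within-world independence $Y(a^*, m) \ci M(a^*) \mid {\bf C}$ to pivot between conditional-on-$M(a^*)$ and marginal-in-$M(a^*)$ expectations of $Y(a, m)$, with the final cancellation hinging on the fact that the additive correction $\delta(M(a^*), {\bf C})$ averages against $f_{M(a^*) \mid {\bf C}}$ in exactly the way needed.
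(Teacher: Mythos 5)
Your proposal is correct, and every step checks out: the NDE-to-CDE$(m')$ reduction uses only the no-interaction assumption plus iterated expectations; the $\delta$-decomposition argument for $\mathrm{NIE}=\mathrm{NIE}^{\mathrm{R}}$ correctly combines the no-interaction assumption with the within-world independence $Y(a^*,m)\ci M(a^*)\mid {\bf C}$ (which the FFRCISTG does supply) and the defining independence of $G(a')$; and the final appeal to the identification of $\mathrm{NIE}^{\mathrm{R}}$ by $\Psi^{\mathrm{NIE}}(P)$ under Assumptions \ref{assn:NUCA-AY}--\ref{assn:NUCA-AM}, \ref{assn:consistency}, and \ref{assn:positivity} is legitimate. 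The route is organized differently from the paper's. The paper works directly at the level of identification: it expands $E[Y\{a,M(a^*)\}]$, substitutes $E\{Y(a,m)\mid M(a^*)=m,{\bf C}\}=E\{Y(a,m')+Y(a^*,m)-Y(a^*,m')\mid M(a^*)=m,{\bf C}\}$ using the no-interaction assumption, removes the conditioning on $M(a^*)$ for the $a^*$-world terms via the within-world independence, reassembles using the marginal (given ${\bf C}$ only) no-interaction statement, and lands on $E[E\{E(Y\mid M,a,{\bf C})\mid a^*,{\bf C}\}]$; the equalities with $\mathrm{NIE}^{\mathrm{R}}$ and $\mathrm{PE}(m)$ then follow from the shared identifying functional and a four-way-decomposition argument sketched in the text. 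Your $\delta(M(a^*),{\bf C})$ is exactly this substitution in disguise, so the key lemma is the same; what your organization buys is a cleaner, purely counterfactual-level proof of $\mathrm{NIE}=\mathrm{PE}(m)$ (via $\mathrm{NDE}=\mathrm{CDE}(m)$, needing no independence assumptions at all) and an explicit cancellation showing $\mathrm{NIE}=\mathrm{NIE}^{\mathrm{R}}$, at the small cost of having to cite the previously established identification of $\mathrm{NIE}^{\mathrm{R}}$ rather than deriving the mediation formula from scratch.
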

While this no mean causal interaction assumption superficially resembles the identifying assumption that
\[E\{Y(a,m)-Y(a^*,m)\mid M(a^*)=m,{\bf C}\}=E\{Y(a,m)-Y(a^*,m)\mid {\bf C}\}\]
for all $m$ of \cite{petersen2006estimation} and \cite{van2008direct}, these are in fact distinct. \cite{vanderweele2009conceptual} note that the latter assumption is essentially a disjunction of Assumption \ref{assn:NPSEM} and the assumption of no individual-level causal interaction between $A$ and $M$ on $Y$.
\begin{corollary}
    \label{cor:no-AM-interaction}
    The NIE$^{\text{R}}$ and the PE$(m)$ satisfy the indirect effect measure criteria under the conditions in Theorem \ref{thm:no-AM-interaction}.
\end{corollary}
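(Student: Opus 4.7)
The plan is to piggyback on the NIE, which was already shown in Section \ref{sec:snc} to satisfy all three indirect effect measure criteria by elementary arguments: the sharp null criterion follows from $Y_i\{a,M_i(a)\}=Y_i\{a,M_i(a^*)\}$ a.s.\ under the sharp mediational null, which forces the expectation contrast defining the NIE to vanish; the sharper null criterion is implied by the sharp null criterion; and the monotonicity criterion follows from the fact that uniform ordering of $Y_i\{a,M_i(a)\}$ and $Y_i\{a,M_i(a^*)\}$ across units translates immediately to the corresponding ordering of their expectations.

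With this in hand, the corollary reduces to invoking Theorem \ref{thm:no-AM-interaction}. That theorem asserts that under the FFRCISTG corresponding to the DAG in Figure \ref{fig:DAG1} together with the no-mean-causal-interaction condition on $Y$ within strata of $M(a^*)$ and ${\bf C}$, the identities $\text{NIE}=\text{NIE}^{\text{R}}=\text{PE}(m)$ hold for every $m$, with common value $\Psi^{\text{NIE}}(P)$. I would then argue criterion-by-criterion: for any counterfactual distribution satisfying both the hypotheses of Theorem \ref{thm:no-AM-interaction} and the sharp mediational null, the NIE equals zero by the first paragraph, and therefore so do the NIE$^{\text{R}}$ and the PE$(m)$ by the equality. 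The same argument with ``sharper mediational null'' in place of ``sharp mediational null'' handles the sharper null criterion. For the monotonicity criterion, under hypothesis (a) the NIE is nonpositive and under hypothesis (b) it is nonnegative, and the equality again transfers this sign information verbatim to the NIE$^{\text{R}}$ and the PE$(m)$.

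The only thing that must be checked — and it is essentially automatic — is that the distributional constraints imposed by each criterion (sharp null, sharper null, or mediational monotonicity) are mutually compatible with the FFRCISTG and the no-mean-interaction condition, so that the class of distributions over which each criterion is tested is nonempty and the Theorem's equality is actually invocable on it. Since the sharp null, sharper null, and monotonicity conditions are all restrictions on individual-level counterfactual contrasts that place no constraint on the FFRCISTG independence structure and do not contradict the vanishing of the expected $A$--$M$ interaction within strata of $M(a^*)$ and ${\bf C}$ (the sharp null in fact implies that interaction vanishes), there is no incompatibility. Thus the proof is a direct corollary in the strictest sense, and I do not anticipate any technical obstacle beyond a one- or two-sentence deployment of the preceding theorem.
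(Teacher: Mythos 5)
Your proposal is correct and takes essentially the same route as the paper: Theorem \ref{thm:no-AM-interaction} gives $\text{NIE}=\text{NIE}^{\text{R}}=\text{PE}(m)$ (all identified by the same functional), and the sharp null, sharper null, and monotonicity criteria already verified for the NIE in Section \ref{sec:snc} transfer verbatim through these equalities. One minor quibble: your parenthetical claim that the sharp mediational null implies the no-mean-interaction condition is false (take $M_i(a)=M_i(a^*)$ for all $i$ with $Y_i(a',m)=a'm$, as in the paper's portion-eliminated example), but this aside is inessential, since your argument only needs the two sets of hypotheses to be jointly imposed, and the criteria would hold vacuously if their intersection were empty.
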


In fact, the no mean causal interaction assumption of Theorem \ref{thm:no-AM-interaction} also identifies the NIE under the FFRCISTG corresponding to the DAG in Figure \ref{fig:DAG2}, i.e., in the presence of an exposure-induced confounder.
\begin{theorem}
    \label{thm:no-AM-interaction-recanting-witness}
    Suppose there is no mean 
    causal interaction between $A$ and $M$ on $Y$ within all strata of $M(a^*)$ and $\bf C$ on the additive scale as in Theorem \ref{thm:no-AM-interaction}. 
    Then under the FFRCISTG corresponding to the DAG in Figure \ref{fig:DAG2}, the NIE is equivalent to the NIE$^{R}$ and the PE$(m)$ for all $m$, and is nonparametrically identified by $\Psi^{\text{NIE}^{\text{R}}}_{L}(P)$.
\end{theorem}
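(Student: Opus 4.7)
The plan is to prove the three claimed equivalences in order, following the template of the proof of Theorem \ref{thm:no-AM-interaction} but adjusting the identifying functional to accommodate the exposure-induced confounder $\bf L$. I begin by rewriting the no-interaction hypothesis: for all $m,m'$,
$E\{Y(a,m) - Y(a^*,m) \mid M(a^*), {\bf C}\} = E\{Y(a,m') - Y(a^*,m') \mid M(a^*), {\bf C}\}$ almost surely, so the common value is a function $h(M(a^*), {\bf C})$ not depending on $m$. By composition, $Y\{a^*, M(a^*)\} = Y(a^*)$, so
$\text{NDE} = E\{E[Y\{a, M(a^*)\} - Y\{a^*, M(a^*)\} \mid M(a^*), {\bf C}]\} = E\{h(M(a^*), {\bf C})\}$,
after invoking the $m = M(a^*)$ instance of the preceding identity. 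The tower property applied to $h$ gives $E\{h(M(a^*), {\bf C}) \mid {\bf C}\} = E\{Y(a,m) - Y(a^*,m) \mid {\bf C}\}$ for any $m$, so $\text{NDE} = \text{CDE}(m)$ for every $m$, and hence $\text{NIE} = \text{TE} - \text{NDE} = \text{PE}(m)$ for every $m$.

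For the identification claim, I would invoke the result (established earlier in the paper) that under the FFRCISTG corresponding to Figure \ref{fig:DAG2}, Assumption \ref{assn:NUCA-MY-L}, and Assumption \ref{assn:positivity}, NIE$^{\mathrm{R}}$ is nonparametrically identified by $\Psi^{\text{NIE}^{\text{R}}}_{L}(P)$, so it suffices to show $\text{NIE} = \text{NIE}^{\mathrm{R}}$. Averaging the no-interaction identity over $M(a^*) \mid {\bf C}$ via the law of total expectation yields that $\mu(a, m, {\bf C}) - \mu(a^*, m, {\bf C})$ is independent of $m$, where $\mu(a', m, {\bf C}) := \int_{\boldsymbol{\ell}} E(Y \mid m, \boldsymbol{\ell}, a', {\bf C}) \, dF_{\bf L \mid A, {\bf C}}(\boldsymbol{\ell} \mid a', {\bf C})$. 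This invariance lets the $\mu$-difference be pulled outside the inner $m$-integral in $\Psi^{\text{NIE}^{\text{R}}}_L(P)$ (since the $f_{M \mid A, {\bf C}}(\cdot \mid a, {\bf C}) - f_{M \mid A, {\bf C}}(\cdot \mid a^*, {\bf C})$ factor integrates to zero), turning $\Psi^{\text{NIE}^{\text{R}}}_L(P)$ into a form that should match the PE$(m)$ identifying functional $E\{E(Y \mid a, {\bf C}) - E(Y \mid a^*, {\bf C})\} - E\{\mu(a, m, {\bf C}) - \mu(a^*, m, {\bf C})\}$ obtained from stage one.

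The step I expect to be the main obstacle is closing this last equality. Unlike in the Figure \ref{fig:DAG1} setting of Theorem \ref{thm:no-AM-interaction}, the $g$-formula for $E\{Y(a) \mid {\bf C}\}$ does not factor as $\int \mu(a, m, {\bf C}) \, dF_{M \mid A, {\bf C}}(m \mid a, {\bf C})$, because ${\bf L}$ and $M$ are generally conditionally dependent given $A$ and ${\bf C}$. Resolving this mismatch will require combining the FFRCISTG independencies with the no-interaction hypothesis in a non-trivial way, presumably by first conditioning on ${\bf L}(a^*)$ and $M(a^*)$ and exploiting $Y(a,m) \ci M(a^*) \mid {\bf C}, {\bf L}(a^*)$ to rewrite the cross-world mean $E[Y\{a, M(a^*)\} \mid {\bf C}]$ before marginalizing, and then using the no-interaction constraint to make the residual ``$a$-side'' and ``$a^*$-side'' contributions cancel so that $\text{PE}(m)$ collapses to $\Psi^{\text{NIE}^{\text{R}}}_L(P)$.
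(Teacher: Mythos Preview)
Your first paragraph establishing $\text{NIE}=\text{PE}(m)$ is correct and matches the paper's approach. The overall plan for the second stage---reducing to $\text{NIE}=\text{NIE}^{\mathrm{R}}$ and then matching identifying functionals---is also sound, and you have correctly isolated the genuine obstacle: under the FFRCISTG for Figure~\ref{fig:DAG2}, $E\{Y(a)\mid{\bf C}\}$ does not factor as $\int\mu(a,m,{\bf C})\,dF_{M\mid A,{\bf C}}(m\mid a,{\bf C})$ because of the ${\bf L}$--$M$ dependence.

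The gap is in your proposed resolution. The conditional independence $Y(a,m)\ci M(a^*)\mid{\bf C},{\bf L}(a^*)$ that you plan to exploit is a \emph{cross-world} statement: since $Y(a,m)=Y\{a,{\bf L}(a),m\}$ depends on ${\bf L}(a)$, the assertion simultaneously constrains counterfactuals indexed by $a$ (through ${\bf L}(a)$) and by $a^*$ (through $M(a^*)$ and ${\bf L}(a^*)$), and the FFRCISTG imposes no such restriction. The only relevant one-world independencies available here are $Y(a',\boldsymbol\ell,m)\ci M(a',\boldsymbol\ell)\ci {\bf L}(a')\mid{\bf C}$ for each fixed $(a',\boldsymbol\ell,m)$, which yield $Y(a^*,m)\ci M(a^*)\mid{\bf L}(a^*),{\bf C}$ but \emph{not} the analogous statement with $Y(a,m)$ in place of $Y(a^*,m)$. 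Consequently, the paper's argument proceeds in the opposite order from what you propose: mirroring the proof of Theorem~\ref{thm:no-AM-interaction}, one first applies the no-interaction hypothesis (conditional on $M(a^*),{\bf C}$) to replace $Y(a,m)$ by $Y(a^*,m)$ plus an $m$-free offset, and only then brings in the same-world FFRCISTG structure to handle the resulting $a^*$-indexed terms and connect to the ${\bf L}$-adjusted functional $\Psi^{\text{NIE}^{\mathrm{R}}}_L(P)$. Your route---condition on ${\bf L}(a^*)$ first, then appeal to an independence involving $Y(a,m)$---cannot be completed under the stated model because the required independence simply is not part of the FFRCISTG.
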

\begin{corollary}
    \label{cor:no-AM-interaction-recanting-witness}
    The NIE$^{\text{R}}$ and the PE$(m)$ satisfy the indirect effect measure criteria under the conditions in Theorem \ref{thm:no-AM-interaction-recanting-witness}.
\end{corollary}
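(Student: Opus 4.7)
The plan is to leverage Theorem \ref{thm:no-AM-interaction-recanting-witness}, which establishes the identity $\mathrm{NIE} = \mathrm{NIE}^{\mathrm{R}} = \mathrm{PE}(m)$ for every $m$ under the hypothesized no-mean-causal-interaction condition and the FFRCISTG of Figure \ref{fig:DAG2}. Crucially, this is a model-wide identity rather than a null-only statement, so it lets me transfer satisfaction of each criterion from the NIE to both NIE$^{\mathrm{R}}$ and PE$(m)$ essentially by substitution. The work reduces to verifying that the NIE itself satisfies each of the three indirect effect measure criteria, which was already argued in Section \ref{sec:snc} but which I would briefly restate for completeness.

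First I would recall the three elementary verifications for the NIE. For the sharp null criterion, under the sharp mediational null $Y_i\{a,M_i(a)\} = Y_i\{a,M_i(a^*)\}$ holds pointwise, so $E[Y\{a,M(a)\}] - E[Y\{a,M(a^*)\}] = 0$. For the sharper null criterion, I would note that the sharper mediational null implies the sharp mediational null, reducing to the previous case. For the monotonicity criterion, under version (a) of mediational monotonicity the pointwise inequality $Y_i\{a,M_i(a)\} \leq Y_i\{a,M_i(a^*)\}$ yields $E[Y\{a,M(a)\}] - E[Y\{a,M(a^*)\}] \leq 0$, and symmetrically under version (b) the mean difference is non-negative.

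Second, I would invoke Theorem \ref{thm:no-AM-interaction-recanting-witness} to conclude. Since each of the three criteria is a statement of the form ``whenever the distribution satisfies counterfactual condition $\mathcal{C}$, the effect measure takes value $0$ (or has a specified sign),'' and the theorem yields $\mathrm{NIE}^{\mathrm{R}} = \mathrm{PE}(m) = \mathrm{NIE}$ for every distribution in the model, whenever $\mathcal{C}$ holds the NIE takes the required value and hence so do NIE$^{\mathrm{R}}$ and PE$(m)$. There is no genuine obstacle here; the corollary is immediate from the equality of the three effect measures, which is precisely why it is stated as a corollary of Theorem \ref{thm:no-AM-interaction-recanting-witness} rather than as a standalone theorem requiring its own counterfactual analysis.
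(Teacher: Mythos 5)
Your proposal is correct and matches the paper's own argument: since Theorem \ref{thm:no-AM-interaction-recanting-witness} gives $\mathrm{NIE}^{\mathrm{R}}=\mathrm{PE}(m)=\mathrm{NIE}$ throughout the model, and the NIE satisfies the sharp null, sharper null, and monotonicity criteria as shown in Section \ref{sec:snc}, the criteria transfer immediately to the NIE$^{\text{R}}$ and PE$(m)$. Your observation that the equality is model-wide (not merely a null-restricted statement) is exactly the point that makes the substitution legitimate, and it is the same reasoning the paper uses.
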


Theorems \ref{thm:no-AM-interaction} and \ref{thm:no-AM-interaction-recanting-witness} highlight the key role $A$--$M$ interaction plays in mediation. Under the no mean causal interaction assumption of Theorems \ref{thm:no-AM-interaction} and \ref{thm:no-AM-interaction-recanting-witness}, the mean reference interaction (definition provided in the proof) of the four-way decomposition of \cite{vanderweele2014unification} 
is zero. 
The CDE$(m)$ and the NIE account for the remaining terms in this decomposition, 
hence, we have $\text{NIE} = \text{TE} - \text{CDE}(m) = \text{PE}(m)$ for all $m$. We have previously seen the portion eliminated to be identified under assumptions that hold both in the presence of an exposure-induced confounder and in the absence of cross-world counterfactual independencies. While we have also seen that the portion eliminated cannot distinguish between mediation and interaction and therefore does not satisfy the sharp(er) null criterion, by ruling out non-mediated interaction, we can indeed rely on it to detect mediation. Likewise, Corollaries \ref{cor:no-AM-interaction} and \ref{cor:no-AM-interaction-recanting-witness} highlight that the same can be said of the NIE$^{\text{R}}$ when non-mediated interaction is ruled out. Indeed, the NIE, NIE$^{\text{R}}$, and PE$(m)$ all align under this no mean causal interaction assumption.


Section S3.2 of the supporting web materials gives additional settings under which the $\text{NIE}^{\mathrm{R}}$ satisfies the indirect effect measure criteria in the absence of cross-world counterfactual independencies. In particular, these criteria are satisfied under the separable effects model of \cite{robins2010alternative}, and the sharp null and sharper null criteria are satisfied when $M$ affects $Y$ for every individual.

\section{Non-mediational interpretations of the NIE$^{\text{R}}$}
\label{sec:interpretations}

While the previous sections have demonstrated that the NIE$^{\text{R}}$ lacks a true indirect effect interpretation, this does not mean that it lacks a useful causal interpretation. In fact, the NIE$^{\text{R}}$ has a natural interpretation as a contrast of two hypothetical stochastic joint interventions \citep{didelez2006direct,geneletti2007identifying}. These can be viewed as stochastic two-stage dynamic treatment regimes where the first stage is an intervention assigning the exposure level and the second stage is an intervention assigning the mediator level. Dynamic treatment regimes have traditionally depended on at most an individual's observed data history prior to each stage, e.g., the intervention assigning the exposure may depend on baseline covariates ${\bf C}$, say $A_{g^*}\sim g_A^*(a\mid {\bf C})$, and the intervention assigning the mediator may depend on both ${\bf C}$ and the intervened exposure $A_{g^*}$, say $M_{g^*}\sim g_M^*(m\mid A_{g^*}, {\bf C})$, where $g_A^*$ and $g_M^*$ are user-specified, potentially degenerate intervention distributions. 

More recently, interventions at a given stage involving the exposure at that same stage have gained popularity \citep{taubman2009intervening, munoz2012population, moore2012causal, haneuse2013estimation, 
young2019inverse, sani2020identification}, and have proven useful for relaxing positivity conditions \citep{kennedy2019nonparametric, papadogeorgou2019causal}. 
Such interventions may depend on the natural value of the observed exposure itself or its conditional distribution given the individual's exposure and covariate history. The joint interventions that are contrasted in the NIE and NIE$^{\text{R}}$ correspond to these two types of interventions depending on the intervention variable in the second stage (i.e., the mediator), respectively. One important distinction, however, is that these two effects involve interventions depending on a counterfactual version of the mediator rather than the observed mediator. When the intervention is a function of the current observed exposure and possibly the covariate and exposure history, the average counterfactual outcome will be the same as that under an intervention setting the exposure to a random draw from the conditional distribution of this function of the observed exposure given the covariate and exposure history. This will not necessarily be the case when considering interventions depending on the counterfactual exposure and its corresponding conditional distribution, as we can see when contrasting the NIE and NIE$^{\text{R}}$.

The joint interventions that are contrasted in the NIE are as follows: one sets $A$ to the level $a$ and $M$ to the natural level the counterfactual $M(a)$ would have taken, and the other sets $A$ to the level $a$ and $M$ to the natural level the counterfactual $M(a^*)$ would have taken. On the other hand, the joint interventions that are contrasted in the NIE$^{\text{R}}$ are as follows: one sets $A$ to the level $a$ and $M$ to a random draw from the conditional distribution of $M(a)$ given ${\bf C}$ (i.e., the realization of the random variable $G(a)$ for a given individual), and the other sets $A$ to the level $a$ and $M$ to a random draw from the conditional distribution of $M(a^*)$ given ${\bf C}$ (i.e., the realization of the random variable $G(a^*)$ for a given individual). 
While these interventions depending on the counterfactual version of the mediator clearly differ from an intervention depending on the observed mediator, under Assumption \ref{assn:NUCA-AM}, the distribution of the former is identified by the conditional distribution of $M$ given ${\bf C}$ and $A$. Thus, the interventions in the NIE$^{\text{R}}$ can alternatively be interpreted as depending on the observed conditional distribution of the mediator. However, the same cannot be said of the NIE, since the counterfactual random variables $M(a)$ and $M(a^*)$ themselves are not observed, and the interventions in the NIE involve setting the mediator to these precise values rather than samples from their conditional distributions.

There is an alternative interpretation of the NIE$^{\text{R}}$ relating to the discussion of the identification formula of the natural direct effect in \cite{van2008direct}. They discuss (a conditional version of) the causal parameter 
\[E\left[\int_m\left\{Y(a,m)-Y(a^*,m)\right\}dF_{M(a^*)\mid {\bf C}}(m\mid {\bf C})\right],\]
which is nonparametrically identified by $\Psi^{\text{NDE}}(P)$ even when Assumption \ref{assn:NPSEM} does not hold. They interpret this effect as ``the [...] expectation [...] of a subject-specific average [...] of the [$m$]-specific individual controlled direct effects [$Y(a,m)-Y(a^*,m)$], averaged with respect to the conditional distribution of [$M(a^*)$] given [$\bf C$].'' This parameter is equivalent to the $\text{NDE}^{R}$. Similarly, we have 
\[\text{NIE}^{\text{R}}=E\left[\int_mY(a,m)\left\{dF_{M(a)\mid {\bf C}}(m\mid {\bf C})-dF_{M(a^*)\mid {\bf C}}(m\mid {\bf C})\right\}\right].\]
Thus, the NIE$^{\text{R}}$ can be interpreted as the difference in means of two subject-specific averages of the $m$-specific individual counterfactual $Y(a,m)$, comparing one that is averaged with respect to the conditional distribution of $M(a)$ given $\bf C$ and another that is averaged with respect to the conditional distribution of $M(a^*)$ given $\bf C$.

\section{Related randomized interventional indirect effect measures}
\label{sec:related}

\cite{lin2017interventional} defined generalizations of the NIE$^{\text{R}}$ to arbitrary path-specific effects, which they showed to be nonparametrically identified under FFRCISTGs corresponding to elaborated versions of the DAG in Figure \ref{fig:DAG2} 
with additional intermediate variables. As these contain the NIE$^{\text{R}}$ as a special case, we already know that one member of this class fails to satisfy the indirect effect measure criteria without stronger assumptions. There exist conventional (i.e., non-randomized interventional) path-specific effects that are nonparametrically identified in this setting, though many are not identified \citep{avin2005identifiability,shpitser2013counterfactual,zhou2021semiparametric}. It seems likely that when the corresponding non-interventional path-specific effect is identified, the interventional and non-interventional path-specific effects will share the same identification formula, and that the former will satisfy the analogous path-specific effect measure criteria with respect to the causal path it is defined in terms of. However, when the corresponding path-specific effect is not identified, it seems unlikely that such an effect will satisfy its corresponding path-specific effect measure criteria. This is merely conjecture, and I do not prove such a result here. 
However, given that one member of this class fails to satisfy its corresponding effect measure criteria when its corresponding non-interventional effect is not identified, the burden of proof would seem to be on demonstrating that any of the other randomized interventional path-specific effects do satisfy them, rather than that they do not.

Variations of randomized interventional indirect effects have also been used in health disparities research, where alternative formulations are of interest due to disagreements around manipulability of certain characteristics such as race. \cite{vanderweele2014causal} first considered a version of the randomized interventional indirect effect that conditions on the exposure being set to a particular level rather than defining the counterfactual outcome under an intervention on the exposure. Instead, the counterfactual is only defined with respect to an intervention on the intermediate variable $M$, and randomized interventions on $M$ corresponding to different levels of the exposure are contrasted. In particular, they define their indirect effect to be 
\[E[Y\{H(a)\}\mid A=a, H(a^*), {\bf C}] - E[Y\{H(a^*)\}\mid A=a, H(a^*), {\bf C}],\]
where $H(a')$ is defined to be a random draw from the conditional distribution of $M$ given $A=a'$ and ${\bf C}$. When the exposure is considered to be manipulable such that counterfactuals $Y(a', m)$ are well-defined and Assumption \ref{assn:NUCA-AM} holds, $H(a')$ is equivalent to $G(a')$; however, this effect definition relaxes both of these assumptions. If additionally, Assumption \ref{assn:NUCA-AY} holds, then the above indirect effect is equivalent to a conditional version of the NIE$^{\text{R}}$, and its expectation will be equivalent to the NIE$^{\text{R}}$ itself. Clearly, under the nonparametric identification assumptions for the NIE$^{\text{R}}$, the indirect effect of \cite{vanderweele2014causal} will not satisfy any of the indirect effect measure criteria since it is equivalent to the NIE$^{\text{R}}$ under these assumptions. \cite{vanderweele2014causal} show their indirect effect to be nonparametrically identified under Assumption \ref{assn:NUCA-MY}, which is clearly weaker than the identifying assumptions of the NIE$^{\text{R}}$, hence it will not satisfy any of the indirect effect measure criteria under this weaker assumption. 

Having said this, \cite{vanderweele2017mediation} made the case that a true mediational interpretation is arguably not the effect measure of primary interest in disparities research. Indeed, if the exposure is thought not to be manipulable, then the indirect effect measure criteria themselves are not meaningful. Instead, this effect is interpreted as the amount of change in the outcome among the population with $A=a$ (e.g., a disadvantaged population) and baseline covariates $\bf C$ if the mediator distribution for this population were to be intervened on to be (randomly at the individual level) set equal to that of the population with $A=a^*$ (e.g., an advantaged population). This interpretation makes the effect useful for identifying sources of disparities between groups that may or may not happen to be causally downstream from group membership. The presence of such an effect suggests that interventions targeting the intermediate variable may be effective at reducing the disparity in view. \cite{jackson2018decomposition} connected the effect of \cite{vanderweele2014causal} to a component of the Kitagana--Blinder--Oaxaca decomposition \citep{kitagawa1955components, blinder1973wage, oaxaca1973male} that is used to study discrimination in the economics literature, and established sets of conditions under which the latter can have a causal interpretation. \cite{jackson2020meaningful} observed that in the bioethics and health services literature, disparities are defined using associational rather than causal language. He instead distinguished sets of ``allowable'' from ``non-allowable'' covariates to consider conditioning on in defining the estimand of interest. Such distinctions are made based on whether one would deem a covariate as an acceptable source of association between group membership and the outcome in the sense of reflecting equity value judgments. Based on this, he presented a refined version of the indirect effect of \cite{vanderweele2014causal}, which considers more meaningful choices of what to condition on in the estimand. The covariates in the conditional expectation of $Y$ and in the intervention distribution $H$ need not align, and each may be distinct from the set of covariates satisfying Assumption \ref{assn:NUCA-MY}. 

\section{Interpretation of the NIE$^\text{R}$ in the Harvard PEPFAR data set}
\label{sec:pepfar}
Recall the Harvard PEPFAR data application introduced in Section \ref{sec:prelims}. The variables $\bf L$ and $\bf C$ have not yet been defined in the context of this example, and $M$ requires further elaboration; the variables $A$ and $Y$ are exactly as described in Section \ref{sec:prelims}. The potential exposure-induced confounder $\bf L$ is a vector consisting of (a) an indicator of the presence of any lab toxicity over the first six months after initiation of therapy, and (b) a trichotomization of a continuous measure of adherence based on pill counts averaged over the same six months, with cutoffs at 80\% and 95\%. The mediator $M$ is the same measure of adherence as in $\bf L$, but averaged over the subsequent six months. 
Lastly, the vector of baseline covariates $\bf C$ consists of sex, age, marital status, WHO stage, hepatitis C virus, hepatitis B virus, CD4+ cell count, viral load, the tertiary hospital affiliated with the patient’s clinic, and whether the patient visited that tertiary hospital or an affiliated clinic.

\cite{miles2017partial} gave both point estimates and partial identification bounds of the natural indirect effect under various sets of causal assumptions. 
Under the NPSEM-IE corresponding to the DAG in Figure \ref{fig:DAG1}, i.e., assuming neither toxicity nor any other event is an exposure-induced confounder, the natural indirect effect was estimated to be $\hat{\psi}^{\text{NIE}}\equiv \hat{\Psi}^{\text{NIE}}(\hat{P})=0.0014$ and was not found to be statistically significant. Nonetheless, for purposes of illustration, I will consider here the causal interpretation of effects corresponding to the identification formula $\Psi^{\text{NIE}}(P)$ if this were hypothetically known to be the true value of this statistical parameter.

Under the FFRCISTG corresponding to the DAG in Figure \ref{fig:DAG2}, the NIE$^{\text{R}}$ is nonparametrically identified, and if one is further willing to assume that there is no exposure-induced confounder, then it is equal to $\Psi^{\text{NIE}}(P)$. According to Theorems \ref{thm:recanting-SNC} and \ref{thm:FFRCISTG-SNC}, the NIE$^{\text{R}}$ lacks a mediational interpretation without any further assumptions. Thus, if the NIE$^{\text{R}}$ were known to be 0.0014, this would not constitute evidence that adherence mediates the effect of ART regimen on virologic failure for any patient. However, this effect does have non-mediational interpretations as discussed in Section \ref{sec:interpretations}. In particular, under the FFRCISTG corresponding to the DAG in Figure \ref{fig:DAG1}, the effect estimate $\hat{\psi}^{\text{NIE}}=0.0014$ has the following interpretation: Under an intervention assigning all patients to receive AZT+3TC+NVP, the risk of virologic failure would be 0.14\% higher under a subsequent intervention forcing adherence to be a random draw from the conditional distribution of the level at which patients would have adhered under AZT+3TC+NVP within the stratum of their observed baseline covariates than it would have been under a subsequent intervention forcing adherence to instead be a random draw from the conditional distribution of the level at which patients would have adhered under TDF+3TC/FTC+NVP within the stratum of their observed baseline covariates. Alternatively, the interpretation of \cite{van2008direct} is as follows: The risk of virologic failure 
under an intervention assigning all patients to receive AZT+3TC+NVP and forcing all of their adherence levels to $m$ is 0.14\% higher when averaging over $m$ according to the distribution of the level at which patients would have adhered under AZT+3TC+NVP within the stratum of their observed baseline covariates 
than it is when averaging over $m$ according to the conditional distribution of the level at which patients would have adhered under TDF+3TC/FTC+NVP within the stratum of their observed baseline covariates.

As has been shown, there are a number of conditions under which the NIE$^{\text{R}}$ 
does satisfy the sharp(er) null criterion. This means that under any such conditions, if the NIE$^{\text{R}}$ were known to be 0.0014, we could conclude that adherence mediates the effect of ART regimen on virologic failure for at least some patients. Here, the meaning of the word ``mediates'' changes slightly depending on which criterion is being considered. The sharp null criterion corresponds to mediation meaning an intervention changing ART regimen leads to a change in adherence level, and this induced change in adherence leads to a change in virologic failure, whereas the sharper null criterion corresponds to mediation meaning an intervention changing ART regimen leads to a change in adherence level, and an intervention changing adherence level leads to a change in virologic failure. Under most, but not all, of the conditions under which the NIE$^{\text{R}}$ satisfies the sharp(er) null criterion, it will also satisfy the monotonicity criterion. In these cases, if the NIE$^{\text{R}}$ were known to be 0.0014, we could conclude that AZT+3TC+NVP causes an increased risk of virologic failure relative to TDF+3TC/FTC+NVP through its effect on adherence for at least some patients. 

The conditions of Theorem \ref{thm:recanting-SNC} imply that 
adherence over the second six months 
is effectively randomized within strata of baseline covariates, ART regimen, levels of toxicity, and adherence over the first six months, such that it is independent of virologic failure under an intervention setting ART regimen and adherence over the second six months to any arbitrary value. 
The conditions of Theorem \ref{thm:FFRCISTG-SNC} imply that 
there is no confounder of the effect of adherence on virologic failure that is differentially affected between the two ART regimens. In particular, toxicity (a) is not differentially affected by the ART regimen, (b) does not affect adherence, or (c) does not affect virologic failure. The potential violation of these three conditions is what motivated the work in \cite{miles2017quantifying}, which deals with toxicity as an exposure-induced confounder. See this article for discussion on why these are thought to be violated. Under either set of conditions for these two theorems, the NIE$^{\text{R}}$ fails to satisfy any of the indirect effect measure criteria, and only has the non-mediational interpretation given above.

The conditions of Theorem \ref{thm:overlap} imply that 
the existence of some patients for whom ART regimen differentially affects adherence and some for whom adherence affects virologic failure means that these two groups overlap in the sense that for some of these patients, \emph{both} their ART regimen affects their adherence level, \emph{and} the resulting change in their adherence level affects whether they experience virologic failure. Under these conditions, the NIE$^{\text{R}}$ will satisfy the sharp(er) null criterion, but not the monotonicity criterion. 

Theorem \ref{thm:no-AM-interaction} and Corollary \ref{cor:no-AM-interaction} will hold if, in addition to the conditions in Theorem \ref{thm:FFRCISTG-SNC}, there is no mean causal interaction between ART regimen and adherence on virologic failure within each stratum of covariates and adherence level under an intervention assigning patients to TDF+3TC/FTC+NVP. Similarly, Theorem \ref{thm:no-AM-interaction-recanting-witness} and Corollary \ref{cor:no-AM-interaction-recanting-witness} will hold if, in addition to the conditions in Theorem \ref{thm:recanting-SNC}, there is no mean causal interaction between ART regimen and adherence on virologic failure within each stratum of covariates and adherence level under an intervention assigning patients to TDF+3TC/FTC+NVP. Under these conditions, the NIE$^{\text{R}}$ will satisfy all of the indirect effect measure criteria. Additionally, the portion eliminated would also be equal to the NIE$^{\text{R}}$ for all levels $m$ of adherence over the second six months. This means that, under the conditions of Theorem \ref{thm:no-AM-interaction}, the differential effect of ART regimen on risk of virologic failure would be reduced by 0.0014 under an intervention setting adherence over the second six months to any particular level. 

Under the conditions of Theorem \ref{thm:no-AM-interaction} 
both the NIE and the NIE$^{\text{R}}$ will be equal to $\Psi^{\text{NIE}}(P)$. As such, we have the following interpretation of the estimate $\hat{\psi}^{\text{NIE}}=0.0014$: The risk of virologic failure is 0.14\% higher under an intervention assigning patients to AZT+3TC+NVP than it would be if adherence were to be forced to the level it would have been had patients instead been assigned to TDF+3TC/FTC+NVP. 
That is, when fixing the ART regimen to AZT+3TC+NVP, it is the change in the risk of virologic failure due to the change in adherence caused by a change in ART regimen from TDF+3TC/FTC+NVP to AZT+3TC+NVP. This is a true mediational interpretation, in that it captures a change in the risk of virologic failure due to a change in adherence induced by a change in ART regimen. 
The interpretation under the conditions of Theorem \ref{thm:no-AM-interaction-recanting-witness} is exactly the same, except the NIE and the NIE$^{\text{R}}$ are instead identified by $\Psi_{L}^{\text{NIE}^{\text{R}}}(P)$.



\section{Discussion}
\label{sec:discussion}
In this article, I have defined the sharp null, sharper null, and monotonicity criteria. I argue that the first is an essential criterion that any effect measure claiming a conventional mediational interpretation with respect to a single mediator or single set of mediators ought to satisfy. The second criterion is a somewhat weaker version not involving so-called cross-world counterfactuals, which some may prefer. In particular, these criteria assert that any true mediational effect measure ought to take its null value when there is no mediated effect for any individual in the population of interest, with each version corresponding to a slightly different interpretation of the term ``mediated effect''. The monotonicity criterion is essential for an indirect effect measure to have utility beyond merely detecting the presence of an indirect effect, and to instead be able to correctly identify, at least for some subjects, the direction of the indirect effect.

Causal mediation analysis is known to be challenging and to rely on a much heavier set of assumptions for identification than most other sorts of causal estimands. The recent dominating trend in methodological development for causal mediation has been to circumvent such assumptions by redefining the target causal parameter to one that closely resembles the natural indirect effect or related path-specific effect, but remains identified even in the presence of exposure-induced confounders and/or in the absence of cross-world counterfactual independence assumptions. This article demonstrates that thus far, these relaxations have resulted in a failure to satisfy the indirect effect measure criteria, and thereby a loss of a true mediational interpretation, at least in terms of the conventional understanding of the word. 
Thus, in the current state of affairs, identification of true causal mediated effects remains challenging. An essential open question is whether there exist any nontrivial effect measures that satisfy these criteria, either in the presence of an exposure-induced confounder or in the absence of cross-world counterfactual independencies. The existence of such an effect measure would have profound implications for longitudinal mediation with time-varying exposures--a setting in which exposure-induced confounders proliferate.

Despite these negative results for randomized interventional indirect effects, I have reviewed alternative meaningful causal interpretations of such effects. These involve stochastic and dynamic interventions on the mediator that depend on the distribution of either the counterfactual or observed value of the mediator. In practice, there are many possible stochastic and dynamic interventions that can be performed on the mediator, and care should be taken to decide which of these interventions is of greatest scientific interest for a given problem. This will at times be the interventions assigning $M$ to follow the conditional distributions of $M(a)$ or $M(a^*)$ given baseline covariates (as in the NIE$^{\mathrm{R}}$), or alternatively the conditional distribution of $M$ given $A=a$ or $A=a^*$ and baseline covariates (as in the randomized interventional effects that avoid interventions on the exposure described in Section \ref{sec:related}). In fact, we have seen the latter interventions to be directly relevant to studies seeking to identify and eliminate sources of disparities. In such settings, quantities related to randomized interventional indirect effects have extremely useful interpretations apart from mediational interpretations. However, based on the findings in this article, one should resist the temptation to imbue randomized interventional indirect effects with a mechanistic mediational interpretation when the indirect effect measure criteria are 
not satisfied, 
or to select such an effect as a substitute for the natural indirect effect if the scientific question is focused on mediation.


\section*{\centering Acknowledgments}
I would like to thank John Jackson, Ilya Shpitser, and Eric Tchetgen Tchetgen for helpful discussions, and two reviewers and an associate editor for constructive comments that have helped to strengthen the content and clarity of the article. I would also like to thank Phyllis Kanki (Harvard) and APIN Public Health Initiatives for the use of the PEPFAR data example.

\bibliographystyle{apalike}

\bibliography{references}

\end{document}